




\documentclass{article}



\usepackage{authblk}

\usepackage{latexsym}
\usepackage{amssymb}
\usepackage{amsmath}
\usepackage{amsthm}
\usepackage{booktabs}
\usepackage{enumerate}
\usepackage{graphicx}
\usepackage{color}
\usepackage{algorithm}
\usepackage[]{algpseudocode}
\usepackage{varwidth}
\usepackage{tikz}
\usetikzlibrary{automata, arrows.meta, bbox, calc, positioning, shapes.geometric, patterns}
\usepackage{cleveref}
\usepackage{circuitikz}
\usepackage{paralist}


\usepackage{xargs}

\usepackage{xspace}

\usepackage{xstring}

\usepackage{boolexpr}

\usepackage[cmex10]{mathtools}
\usepackage{latexsym}
\usepackage{textcomp}
\usepackage{pifont}


\newcommand{\argemp}[2]{\if&#1&\else#2\fi}

\newcommand{\argdef}[2]{\if&#1&#2\else#1\fi}


\newcommand{\argint}[3]{\if&#2&\else#1#2#3\fi}

\newcommand{\argext}[3]{\if&#1&#3\else#1\if&#3&\else#2#3\fi\fi}


\newcommandx{\mthfnt}[3][1=, 2=0]{{
	\IfStrEqCase{#1}
	{%
		{}%
		{#3}%
		{Name}%
		{%
			\IfStrEqCase{#2}
			{%
				{0}{\mathcal{#3}}%
				{1}{\mathscr{#3}}%
				{2}{\mathfrak{#3}}%
				{3}{\mathbb{#3}}%
			}
			[\ensuremath{\clubsuit}]%
		}%
		{Set}%
		{%
			\IfStrEqCase{#2}
			{%
				{0}{\mathrm{#3}}%
				{1}{\mathsf{#3}}%
				{2}{\mathbb{#3}}%
				{3}{\mathbf{#3}}%
			}
			[\ensuremath{\clubsuit}]%
		}%
		{Fun}%
		{%
			\IfStrEqCase{#2}
			{%
				{0}{\mathsf{#3}}%
				{1}{\mathrm{#3}}%
			}
			[\ensuremath{\clubsuit}]%
		}%
		{Rel}%
		{%
			\IfStrEqCase{#2}
			{%
				{0}{\mathit{#3}}%
				{1}{\mathtt{#3}}%
			}
			[\ensuremath{\clubsuit}]%
		}%
		{Sym}%
		{%
			\IfStrEqCase{#2}
			{%
				{0}{\mathtt{#3}}%
				{1}{\mathbf{#3}}%
			}
			[\ensuremath{\clubsuit}]%
		}%
		{Elm}%
		{\mathnormal{#3}}
	}
[\ensuremath{\clubsuit}]%
}}

\newcommand{\mthsub}[1]{\argemp{#1}{\ensuremath{_{\mathnormal{#1}}}}}

\newcommand{\mthsup}[1]{\argemp{#1}{\ensuremath{^{\mathnormal{#1}}}}}

\newcommandx{\mth}[5][1=, 2=0, 4=, 5=]{{\ensuremath{\mthfnt[#1][#2]{#3}\mthsub{#4}\mthsup{#5}}}}

\newcommandx{\mtharg}[6][1=, 2=0, 4=, 5=]{{\mth[#1][#2]{#3}[#4][#5]\ensuremath{\argint{(}{#6}{)}}}}

\newcommand{\mthempty}{\mth[][]}

\newcommand{\mthstyname}{0}
\newcommand{\mthname}[1][]{\mth[Name][\argdef{#1}{\mthstyname}]}

\newcommand{\mthstyset}{0}
\newcommand{\mthset}[1][]{\mth[Set][\argdef{#1}{\mthstyset}]}

\newcommand{\mthstyfun}{0}
\newcommand{\mthfun}[1][]{\mth[Fun][\argdef{#1}{\mthstyfun}]}

\newcommand{\mthstysym}{0}
\newcommand{\mthsym}[1][]{\mth[Sym][\argdef{#1}{\mthstysym}]}

\newcommand{\mthstyelm}{0}
\newcommand{\mthelm}[1][]{\mth[Elm][\argdef{#1}{\mthstyelm}]}

\newcommand{\tuple}[1]
{\ensuremath{\!\argint{\langle}{#1}{\rangle}}}

%
%
%

\newcommand{\ignore}[1]{}

\newcommand{\SetN}{\mathbb{N}}
\newcommand{\SetZ}{\mathbb{Z}}

\newcommand{\card}[1]{\mthempty{\argint{\vert}{#1}{\vert}}}
\newcommand{\size}[1]{\mthempty{\argint{\vert\vert}{#1}{\vert\vert}}}

\newcommand{\LTL}{\mthfun{LTL}\xspace}

\def\Nat{\mathbb{N}}

\newcommand{\Ag}{\mthset{N}}
\newcommand{\Ac}{\mthset{Ac}}
\newcommand{\AcProf}{\vec{\Ac}}
\newcommand{\St}{\mthset{St}}
\newcommand{\state}{\mthsym{s}\xspace}
\newcommand{\instate}{\ensuremath{\state_{\mthfun{in}}}}

\renewcommand{\Game}{\mthname{G}}


\newcommand{\trnFun}{\mthfun{tr}}
\newcommand{\act}{\mthsym{a}}
\newcommand{\jact}{\vec{\mthsym{a}}}

\newcommand{\StrSet}{\mthset{Str}}

\newcommand{\strElm}{\sigma}

\newcommand{\strpElm}{\mthelm{\vec{\sigma}}}


\newcommand{\NE}{\mthset{NE}}


\newcommand{\winsym}{\mthset{Win}}
\newcommandx{\Win}[3][1=, 2=, 3=]
{\mthset{\winsym#3}[#1][#2]}

\newcommand{\presym}{\mthfun{Pre}}
\newcommandx{\Pre}[3][1=, 2=, 3=]
{\mthset{\presym#3}[#1][#2]}

\newcommand{\eqsym}{\mthfun{Eq}}
\newcommandx{\Eq}[3][1=, 2=, 3=]
{\mthset{\eqsym#3}[#1][#2]}





\newcommandx{\AFW}[5][1=, 2=, 3=, 4=, 5=]
{\txtargname{AFW#5{\small\argint{$[$}{#1}{$]$}}}[#2][#3]{#4}\xspace}



\newcommand{\MP}[1][]{%
	\ifthenelse{\equal{#1}{}}{{\small{\sf mp}}}{{\small{\sf mp}}(#1)}%
\xspace}

\def\Rat{\mathbb{Q}}

\providecommand{\strFun}[1][]{\mthfun{\sigma}}
\providecommand{\pstrFun}[1][]{\mthfun{\sSym}}

\newcommand{\wFun}{\mthfun{w}}

\def\avg{{\sf avg}}

\def\trg{{\sf trg}}

\newcommand{\pay}{\mthfun{pay}\xspace}
\newcommand{\bestNE}{\mthfun{bestNE}\xspace}
\newcommand{\worstNE}{\mthfun{worstNE}\xspace}


\def\ptime{\mthfun{P}\xspace}
\def\np{\mthfun{NP}\xspace}
\def\conp{\mthfun{coNP}\xspace}

\def\FP{\mthfun{FP}\xspace}

\newcommand{\sink}{\mthsym{sink}}

\newcommand{\DeltaPTwo}{\Delta^{\mthfun{P}}_2}

\newcommand{\RM}{\mathcal{M}\xspace}

\newcommand{\protocol}{\mthfun{d}}

\newcommand{\implement}{\dagger}

\newcommand{\out}{\mthfun{out}}
\newcommand{\Paths}{\mthfun{Paths}}

\newcommand{\MinW}{\mthset{MinW}}
\newcommand{\MaxW}{\mthset{MaxW}}

\renewcommand{\epsilon}{\varepsilon}
\newcommand{\epsilondash}{\epsilon\text{-}}
\newcommand{\GammaNE}{\Gamma\mthfun{NE}\xspace}

\usepackage{tikz}
\usetikzlibrary{arrows,shapes,snakes}

\usepackage{wrapfig}

\newcommand{\NewExample}{
\scalebox{0.6}{
	\vspace{-15pt}
	\begin{minipage}{0.25\textwidth}
		\begin{circuitikz}
			\draw[pattern = dots, pattern color=red, draw=white] (0,0) rectangle (3,1);
			\draw (0,0) rectangle (3,5);
			\fill[black, draw=black] (1,1) rectangle (2,2);
			\fill[black, draw=black] (1,3) rectangle (2,4);

			\draw[brown, thick] (2,1) to[normal open switch] (3,1);
			\draw[brown, thick] (3,4) to[normal open switch] (2,4);
			\draw[brown, thick] (2,2) to[normal open switch] (2,3);
			
			\draw[-stealth, blue] (2.5,1) -- (2.5,1.5);
			\draw[-stealth, blue] (2.5,4) -- (2.5,3.5);
			\draw[-stealth, blue] (2,2.5) -- (1.5,2.5);
			
			\node at (.5,2.5) {$ s $};
			\node at (1.5,4.5) {$ l $};
			\node at (1.5,.5) {$ r $};
			\node at (2.5,2.5) {$ m $};
			
			\node at (0.5,1.5) {\includegraphics[width=0.8cm]{}};
		\end{circuitikz}
	\end{minipage}
	\hspace{20pt}
	\begin{minipage}{0.25\textwidth}
	\scalebox{0.8}{
	\begin{tikzpicture}[state/.style={circle, draw, minimum size=1cm}, node distance=2cm]
		\node[state, label=below:{$ $}] (s^0) {\small $t$};
		\node[] (start) [left =0.5cm of s^0] {};
		\node[state, label=below:{$ $}] (s^2) [right = of s^0] {\small $m$};
		\node[state, label=below:{$ $}] (s^1) [above = of s^2] {\small $l$};
		\node[state, label=below:{$ $}] (s^3) [below = of s^2] {\small $r$};
		
		\draw [-{Latex[width=2mm]}]
		(start) edge node{} (s^0)
		(s^0) edge[bend right] node[right]{\small $L$} (s^1)
		(s^1) edge[bend right] node[right]{\small $T$} (s^0)
		
		(s^2) edge[] node[above]{\small $T$} (s^0)
		
		(s^0) edge[bend left] node[right]{\small $R$} (s^3)
		(s^3) edge[bend left] node[right]{\small $T$} (s^0)
		
		(s^1) edge[] node[right]{\small $M$} (s^2)
		(s^3) edge[] node[right]{\small $M$} (s^2)
		
		
		;
	\end{tikzpicture}
	}
	\end{minipage}
}
}

\newcommand{\RMExample}{
\scalebox{0.65}{

\begin{tikzpicture}[state/.style={diamond, draw, minimum size=1cm}, node distance=1cm, rotate=-90]
	\node[state, label=below:{$ $}] (s^0) {\small $q_0$};
	\node[] (start) [above =0.5cm of s^0] {};
	\node[state, label=below:{$ $}] (s^3) [below = of s^0] {\small $q_3$};
	\node[state, label=below:{$ $}] (s^1) [left = of s^3] {\small $q_1$};
	\node[state, label=below:{$ $}] (s^2) [right = of s^3] {\small $q_2$};

	\draw [-{Latex[width=2mm]}]
	(start) edge node{} (s^0)
	(s^0) edge[] node[left]{\small $l, 0$} (s^1)
	(s^0) edge[] node[right]{\small $r, 0$} (s^2)
	(s^1) edge[] node[below]{\small $m, 1$} (s^3)
	(s^2) edge[] node[below]{\small $m, 0$} (s^3)
	(s^3) edge[] node[left]{\small $t, 0$} (s^0)
	
	(s^1) edge[bend left] node[left]{\small $t, 0$} (s^0)
	(s^2) edge[bend right] node[right]{\small $t, 0$} (s^0)
	
	
	;
\end{tikzpicture}

}
}

\newcommand{\RMExampletwo}{
	\scalebox{0.7}{
		
		\begin{tikzpicture}[state/.style={diamond, draw, minimum size=1cm}, node distance=0.9cm]
			\node[state, label=below:{$ $}] (s^0) {\small $q_0$};
			\node[] (start) [above =0.5cm of s^0] {};
			\node[state, label=below:{$ $}] (s^1) [left = of s^0] {\small $q_1$};
			\node[state, label=below:{$ $}] (s^3) [left = of s^1] {\small $q_3$};
			\node[state, label=below:{$ $}] (s^2) [right = of s^0] {\small $q_2$};
			\node[state, label=below:{$ $}] (s^4) [left = of s^3] {\small $q_4$};
			\node[state, label=below:{$ $}] (s^5) [left = of s^4] {\small $q_5$};
			\node[state, label=below:{$ $}] (s^6) [left = of s^5] {\small $q_6$};

			\draw [-{Latex[width=2mm]}]
			(start) edge node{} (s^0)
			(s^0) edge[] node[above]{\small $l, 0$} (s^1)
			(s^0) edge[] node[above]{\small $r, 0$} (s^2)
			(s^1) edge[] node[above]{\small $m, 0$} (s^3)
			(s^3) edge[] node[above]{\small $s, 0$} (s^4)
			(s^4) edge[] node[above]{\small $l, 0$} (s^5)
			(s^5) edge[] node[above]{\small $m, 1$} (s^6)
			
			(s^1) edge[bend right] node[below]{\small $s, 0$} (s^0)
			(s^5) edge[bend left] node[below]{\small $s, 0$} (s^0)
			(s^6) edge[bend left] node[above]{\small $s, 0$} (s^0)
			
			(s^2) edge[loop, out=70, in=110, distance=1cm] node[above]{$*, 0$} (s^2)
			
			;
		\end{tikzpicture}

	}
	\vspace{-15pt}
}



\newtheorem{theorem}{Theorem}
\newtheorem{lemma}[theorem]{Lemma}

\newtheorem{definition}{Definition}
\newtheorem{example}{Example}
\newtheorem{construction}{Construction}





\begin{document}







\title{Synthesis of Reward Machines\\for Multi-Agent Equilibrium Design\\ (Full Version)~\thanks{A conference version of this work will appear in the proceedings of the 27th European Conference on Artificial Intelligence (ECAI'24).}}

\author[1]{Muhammad Najib}
\author[2]{Giuseppe Perelli}

\affil[1]{Heriot-Watt University}
\affil[2]{Sapienza University of Rome}

\date{}

\maketitle



\begin{abstract}
Mechanism design is a well-established game-theoretic paradigm for designing games to achieve desired outcomes. This paper addresses a closely related but distinct concept, equilibrium design. 
Unlike mechanism design, the designer's authority in equilibrium design is more constrained; she can only modify the incentive structures in a given game to achieve certain outcomes without the ability to create the game from scratch.
We study the problem of equilibrium design using dynamic incentive structures, known as reward machines.
We use weighted concurrent game structures for the game model, with goals (for the players and the designer) defined as mean-payoff objectives. We show how reward machines can be used to represent dynamic incentives that allocate rewards in a manner that optimises the designer's goal.
We also introduce the main decision problem within our framework, the payoff improvement problem. 
This problem essentially asks whether there exists a dynamic incentive (represented by some reward machine) that can improve the designer's payoff by more than a given threshold value.
We present two variants of the problem: strong and weak. We demonstrate that both can be solved in polynomial time using a Turing machine equipped with an NP oracle. Furthermore, we also establish that these variants are either NP-hard or coNP-hard.
Finally, we show how to synthesise the corresponding reward machine if it exists.
\end{abstract}




\section{Introduction}
	Over the past decade, Nash equilibrium (NE) and other game-theoretic concepts have been extensively used to analyse concurrent and multi-agent systems (see e.g.,~\cite{GHW15,bouyer2015pure,WooldridgeGHMPT16,GutierrezHW17-aij}).
	In this research, systems are modelled as games with agents acting rationally to fulfil their preferences. While preferences are often expressed qualitatively (e.g. by temporal logic formulae), many systems require more complex models to capture quantitative aspects like resource consumption, cost, or performance~\cite{GMPRW17,AKP18,AL20,BG22}. 
	Games with \textit{mean-payoff} objectives~\cite{ZP96} provide such richer preference modelling.

	The game-theoretical analysis of mean-payoff games (MPGs) is a significant research area, especially in verifying their correctness~\cite{UW11,brenguier2016robust,BriceRB21,steeples2021mean,gutierrez2023complexity,brice2023rational}.
	This involves checking whether a formal property is satisfied in some or all equilibrium outcomes.
	A pertinent question is: ``what if the property is not satisfied in any equilibrium outcomes?''
	\textit{Equilibrium design}~\cite{GNPW19} addresses this question.
	Inspired by the mechanism design paradigm~\cite{Mye89,HR06}, equilibrium design offers a way to rectify equilibrium outcomes.
	However, unlike mechanism design, the designer in equilibrium design cannot create the game from scratch, but can only modify the incentive structures of an existing game.

	In~\cite{GNPW19}, the authors proposed \textit{subsidy schemes} to introduce equilibria in concurrent MPGs satisfying some LTL formula~\cite{pnueli:77a}.
	In that setting, a subsidy scheme is modelled by a function mapping from states and players to additional rewards.
	If a player visits a certain state, the corresponding reward is paid to the player.
	In this paper, we generalise this incentive model with \textit{reward machines}~\cite{icarte2022reward}.
	Such machines implement a reward mechanism that considers the execution history to dynamically assign rewards.
	Thus, at each iteration of the game, every agent receives a utility combining the original weight and an authoritative reward based on the current game state and the internal reward machine state.
	As we will show later (see \Cref{ex:1}), this generalisation allows us to obtain a more expressive model of incentive.

	We consider games where each agent has a weight function over the states, with mean-payoff aggregation as their utility function.
	Additionally, a global weight function measures the designer satisfaction, also as a mean-payoff value over executions.
	We employ reward machines to improve the designer satisfaction.
	Intuitively, these machines reconfigure weights after each iteration, thus reshaping the set of equilibria.
	The objective is to improve the global payoff over the set of equilibria by a fixed amount $\Delta$.
	This can be achieved \emph{strategically} by synthesising and implementing an appropriate reward machine.
	To make the setting realistic, we assume the reward spent on each agent in every iteration is subtracted from the global weight, factoring the cost into the resulting global payoff.
	
	
	In MPGs, infinite memory may be required to achieve optimal values~\cite{VCDHRR15}. As we will demonstrate later, infinite memory may be necessary to achieve the optimal global payoff value. However, since a reward machine is typically represented by a finite-state machine (specifically, a Mealy machine in this work), there may be cases where no finite-state reward machine can improve the global payoff by a given $ \Delta $. Therefore, we consider an approximate solution. In particular, we aim to find a reward machine that can improve the global payoff by a value $ \epsilon $-close to $ \Delta $ for a given $ \epsilon > 0 $. Moreover, $ \epsilon $ can be arbitrarily small, allowing for an arbitrary level of precision.
	
	In general, a game may have multiple equilibria. Therefore, we study the problem under both the \textit{optimistic} and \textit{pessimistic} settings. Specifically, we consider the problem of improving the global mean-payoff over the \textit{best} possible NE by adopting an optimistic view that agents will select the equilibrium \textit{most} convenient for the designer. We call this the \textit{weak improvement} problem. 
	Conversely, we also consider improving the global mean-payoff over the \textit{worst} possible NE, considering the pessimistic case when agents select the \textit{least} convenient equilibrium for the designer. We call this the \textit{strong improvement} problem.
	Furthermore, we classify the complexity of these problems. We show that both can be solved in $\ptime^{\np} = \DeltaPTwo$ and are at least $\np$-hard or $ \conp $-hard. To our knowledge, this is the first work that employs reward machines in the context of MPGs and game-theoretic equilibria.
	
	\paragraph{Related work} 
	As previously mentioned, this work is closely related to \cite{GNPW19}, but it differs in several key aspects. Firstly, our incentive model is more expressive due to the use of reward machines. Furthermore, we measure the global property using a quantitative metric (i.e. mean-payoff value), as opposed to the qualitative property in \cite{GNPW19}. In this respect, we provide a richer modelling of global preferences. Equilibrium design has a deep connection to mechanism design, but the two are not exactly the same. Typically in mechanism design, the designer is not given a predetermined game structure, but instead is required to provide one. Moreover, in mechanism design, the designer must ensure the reward structure is \textit{incentive compatible}, which is not the case in equilibrium design, as the designer is primarily interested in the global payoff, unlike in mechanism design where the designer is primarily interested in the agents' payoffs.
	
	On the other hand, the concept of a reward machine originated from the field of reinforcement learning (RL). Much of the existing work is within the domain of single-agent RL \cite{icarte2018using,toro2019learning,icarte2022reward}. In \cite{neary2021}, the authors explored reward machines for multi-agent RL systems. However, in this work, the reward machine is manually generated, as opposed to being automatically synthesised. \cite{varricchione2023synthesising} tackles the problem of automatically synthesising reward machines in cooperative multi-agent RL. Specifically, the reward machines are partly synthesised from Alternating-time Temporal Logic (ATL) specifications. However, this line of research focuses on RL systems, which differ from MPGs. Moreover, none of these papers consider any game-theoretical solution concepts.
	
	Another related line of work involves designing equilibria using \textit{norms}. Norm-based mechanism design has been studied in \cite{bulling2016norm}. In particular, they studied \textit{weak} and \textit{strong} implementability, which are related to the problems addressed in our work in the sense that they correspond to optimistic (\textit{``there is some good behaviour''}) and pessimistic (\textit{``all behaviours must be good''}) assumptions. In \cite{huang16,perelli19,ADLP22}, automata-based norms, referred to as \textit{dynamic norms}, are considered. All of these works fall within the domain of normative systems, which is different from the setting considered in this paper. We believe that an incentive-based equilibrium design provides a complementary approach to norm-based equilibrium design. This is because in some circumstances, a norm may not be enforceable, but only incentives are possible (e.g., congestion/road pricing in the Ultra Low Emission Zone (ULEZ) in London).
	

\section{Preliminaries}


In this section we introduce the basic notions that will be used throughout the paper. We start with the definition of mean-payoff value and multi-player mean-payoff games.

\paragraph{Mean-Payoff}
For an infinite sequence $r \in \mathbb{R}^\omega$, let $\MP(r)$ be
the \emph{mean-payoff} value of $r$, that is, 
$ \MP(r) = \lim \inf_{n \to \infty} \avg_n(r) $
where, for $n \in \mathbb{N}\setminus\{0\}$, we define
$\avg_n(r) = \frac{1}{n}\sum_{j=0}^{n-1} r_j$, with $r_j$ the $(j\!+\!1)$th element of $r$.

\paragraph{Multi-Player Mean-Payoff Game}
A \emph{multi-player mean-payoff game} is a tuple
$ \Game = \tuple{\Ag,  \Ac, \St, \instate, (\protocol_i)_{i \in \Ag}, \trnFun, (\wFun_{i})_{i \in \Ag}, \wFun_{g} } $ 
where 
\begin{itemize}
	\item $\Ag = \{1,\dots,n\}$, $\Ac$, and $\St$ are finite non-empty sets of \emph{players}, \emph{actions}, and \emph{states}, respectively;
	\item $\instate \in \St$ is the \emph{initial state};
	\item $ \protocol_i : \St \to 2^{\Ac} \setminus \{\varnothing\} $ is a \textit{protocol function} for player $ i $ returning possible action at a given state;
	\item $\trnFun : \St \times \AcProf \rightarrow \St$ is a \emph{transition function} mapping each pair consisting of a state $s \in \St$ and an \emph{action profile} $\jact = (\act_{1}, \dots, \act_{n}) \in \AcProf = \Ac^{n}$, one for each player, to a successor state---we write $\jact_{i}$ for $\jact_{\{i\}}$ and $ \jact_{-i} $ for $ \jact_{\Ag \setminus \{i\}} $.
	For two decisions $\jact$ and $\jact'$, we write $(\jact_{C}, \jact_{-C}')$ to denote the decision where the actions for players in $ C \subseteq \Ag $ are taken from $\jact$ and the actions for players in $ \Ag \setminus C $ are taken from $\jact'$;
	\item  $\wFun_{i}: \St \to \SetZ$ is player $ i $'s \textit{weight function} mapping, for every player~$i$ and every state of the game into an integer number; and
	\item $ \wFun_{g}: \St \to \SetZ $ is a \textit{global weight function} mapping every state of the game into an integer number.
\end{itemize}
We define the \textit{minimum} and \textit{maximum weights} appearing in $ \Game $ as follows.

\begin{definition}
	\label{def:minmaxw}
	For a given game $ \Game $ and its set of states $ \St $, define $ \MinW^{\Game}_{j} = \min \{ \wFun_{j}(\state) \mid \state \in \St \} $ and $ \MaxW^{\Game}_{j} = \max \{ \wFun_{j}(\state) \mid \state \in \St \} $.
\end{definition}


A \textit{path} is an infinite sequence $ \pi = \state_0, \state_1, \state_2,... \in \St^{\omega} $ such that for each $ k \in \Nat $, there is an action profile (in $ k $-th step) $ \jact^{k} \in \prod_{i \in \Ag} \protocol_i(\state_k) $, such that
$ \state_{k + 1} = \trnFun(\state_k, \jact^{k}) $. We write $ \pi_{\leq k} $ to denote the prefix of $ \pi $ up to and including $ \state_{k}$. Similarly, $ \pi_{\geq k} $ denotes the suffix of $ \pi $ starting from $\state_{k} $. Let $ \Paths_{\Game}(s) $ be the set of all possible paths in $ \Game $ starting from $ s $.

A \textit{strategy} for agent $ i $ is a Mealy machine $\strElm_{i} = (T_{i}, t_{i}^{0}, \St, \gamma_i, \rho_i) $, where $T_{i}$ is
a finite and non-empty set of \emph{internal states}, $ t_{i}^{0} $ is
the \emph{initial state},
$\gamma_i: \St \times T_{i} \rightarrow T_{i} $ is a deterministic
\emph{internal transition function}, and
$\rho_i: \St \times T_{i} \rightarrow \Ac_i$ an \emph{action function}.
We say that a strategy $ \strElm_{i} $ is \textit{valid} with respect to $ \Game $ if and only if $ \rho_i(s,t_j) \in \protocol_i(s) $. From now on, we restrict our attention to valid strategies, and, unless otherwise stated, refer to them simply as strategies.
We denote $\StrSet_i(\Game)$ the set of valid strategies for player $i$ in $\Game$.
Moreover, for a given strategy $ \strElm_{i} $ and a finite sequence $ \hat{\pi} \in \St^{\ast} $, by $ \strElm_{i}(\hat{\pi}) \in \Ac $ we denote the action prescribed by the action function $ \rho_i $ of $ \strElm_{i} $ after the sequence $ \hat{\pi} $ has been fed to the internal transition function $ \gamma_i $.
Note that the model of strategies implies that strategies have \textit{perfect information} and \textit{finite memory}, although we impose no bounds on memory size.

A \emph{strategy profile} $\strpElm = (\strElm_1, \dots, \strElm_n)$ is a vector of strategies, one for each player.
As with actions, $\strpElm_{i}$ denotes the strategy assigned to player $i$ in profile $\strpElm$.
Moreover, by $(\strpElm_{B}, \strpElm'_{C})$ we denote the combination
of profiles where players in disjoint $B$ and $C$ are assigned their corresponding strategies in $\strpElm$ and $\strpElm'$, respectively.
We denote $\StrSet_{A}(\Game)$ the set of strategy profiles for the set $A$ of agents.
We also use $\StrSet(\Game) = \StrSet_{\Ag}(\Game)$ to denote the strategy profiles for all the agents in the game.
Whenever the game is clear from the context, we also simply use $\StrSet$.
Once a state $s$ and profile $\strpElm$ are fixed, the game has an \emph{outcome}, a path in $ \Game $, denoted by $\pi(\strpElm, s)$. 
Because strategies are deterministic, $\pi(\strpElm, s)$ is the unique path induced by $\strpElm$, that is, the sequence $\state_0, \state_1, \state_2, \ldots$ such that 
\begin{itemize}
	\item $\state_{k + 1} = \trnFun (\state_k, (\rho_1(\state_k,t^k_1), \ldots, \rho_n(\state_k,t^k_n)))$, and 
	\item $t^{k + 1}_i = \gamma_i(s^k_i, t^k_i)$, for all $k \geq 0$. 
\end{itemize}

For a subset of agents $ C \subseteq \Ag $ and strategies $ \strpElm_{C} $, we say that a path $ \pi $ is \textit{compatible} with $ \strpElm_{C} $ if, for every $ k \in \Nat $, there exists an action profile $ \jact^{k} $ with $ \act_i^{k} = \strElm_{i}(\pi_{\leq k}) $ for each $ i \in C $, such that $ \state_{k + 1} = \trnFun(\state_k,\jact^{k}) $. Intuitively, $ \pi $ is compatible with $ \strpElm_{C} $ if it can be generated when the agents in $ C $ play according to their respective strategies. We denote by $ \out_{\Game}(\state,\strpElm_{C}) $ the set of paths starting from $ s $ and compatible with $ \strpElm_{C} $. Observe that $ \Paths_{\Game}(s) $ can also be written as $ \out_{\Game}(\state,\varnothing) $. 

Given a game $ \Game $ and a strategy profile $ \strpElm $, a path $\pi(\strpElm)$ induces, for each player $ i $, an infinite sequence of integers $\wFun_i(\pi(\strpElm)) = \wFun_i(\instate) \wFun_i(\state_1) \cdots$. Similarly, $\pi(\strpElm)$ also induces such a sequence of integers for the global weight function $ \wFun_{g}(\cdot) $.
The \textit{payoff} of player $i$ in game $\Game$ is $\pay_{i}^{\Game}(\strpElm) = \MP(\wFun_{i}(\pi(\strpElm)))$, and the \textit{global payoff} of $\Game$ is $\pay_{g}^{\Game}(\strpElm) = \MP(\wFun_{g}(\pi(\strpElm)))$.
Whenever the game is clear from the context, we simply use $\pay_{i}(\strpElm)$ and $\pay_{g}(\strpElm)$, respectively. 

\paragraph{Nash Equilibrium}
Using payoff functions, we can define the game-theoretic concept of Nash equilibrium~\cite{OR94}. 
For a multi-player game $\Game$, a strategy profile
$\strpElm$ is a \emph{Nash equilibrium} of~$\Game$ if, for every player~$i$ and strategy $\strElm'_i$ for player~$i$, we have
$$
\pay_i(\strpElm) \geq	\pay_i((\strpElm_{-i},\strElm'_i)) \ . 
$$
We also say that $\strpElm$ is a $j$-fixed Nash Equilibrium~\cite{KPV14} if
$\pay_i(\strpElm) \geq	\pay_i((\strpElm_{-i},\strElm'_i))$ for every player $i \neq j$ different from the fixed $j$.

Let $\NE(\Game)$  and $\NE_{j}(\Game)$ be the set of Nash Equilibria and $j$-fixed Nash Equilibria of~$\Game$.
We define $\bestNE(\Game) = \sup_{\strpElm \in \NE(\Game)} \{\pay_{g}(\strpElm)\}$ as the \emph{best global payoff} over the set of possible outcomes sustained by a Nash Equilibrium in the game.
Equivalently, we define $ \worstNE(\Game) = \inf_{\strpElm \in \NE(\Game)}\{\pay_{g}(\strpElm)\} $ as the \emph{worst global payoff} over the set of possible outcomes sustained by a Nash Equilibrium in the game.
In the case of $ \NE(\Game) $ is empty, in order to make the values of $ \bestNE(\Game) $ and $ \worstNE(\Game) $ well defined, we assume that $ \bestNE(\Game) = \worstNE(\Game) = \MinW_{g}^{\Game} $.


\section{Reward Machines for Equilibrium Design}


	In this section, we introduce a type of finite state machine, called a \emph{reward machine} (RM).
	A RM takes a path $ \pi $ as input, and outputs a sequence of vectors $\vec{v}_{0}, \vec{v}_{1} \cdots \in (\SetN^{n})^{\omega}$ that corresponds to the reward granted to the players at each step of the path. Formally, a RM is defined as a Mealy machine:
	
	\begin{definition}[Reward Machine]
		A RM is a Mealy machine $ \RM = \tuple{Q^{\RM},q_{0}^{\RM},\delta^{\RM},\tau^{\RM}} $, where $ Q^{\RM} $ is a finite (non-empty) set of states, $ q_{0}^{\RM} $ the initial state, $ \delta^{\RM} : Q^{\RM} \times \St \to Q^{\RM} $ a deterministic transition function, and $ \tau^{\RM} : Q^{\RM} \times \St \to \SetN^n $ a reward function where $\tau_{i}^{\RM}(q) = \tau^{\RM}(q)(i)$ is the reward in the form of a natural number $ k \in \SetN $ imposed on player $i$ if the play visits $(s, q) \in \St \times Q^{\RM}$.
		Sometimes, when it is clear from the context, the elements of the RM are denoted without superscripts.
	\end{definition}
	
	
	\paragraph*{Reward Machine implementation.} 
	
	For a given game $\Game = \tuple{\Ag,  \Ac, \St, \instate, (\protocol_i)_{i \in \Ag}, \trnFun, (\wFun_{i})_{i \in \Ag}, \wFun_{g} }$, the implementation of $\RM$ on $\Game$ is the game 
	
	\begin{center}	
	$\Game \implement \RM =\tuple{\Ag,  \Ac, \St \times Q, (\instate, q_{0}), (\protocol_i^{\RM})_{i \in \Ag}, \trnFun^{\RM}, (\wFun_{i}^{\RM})_{i \in \Ag}, \wFun_{g}^{\RM} }$, 
	\end{center}
	where:
	\begin{inparaenum}[(i)]
		\item 
			$\protocol_i^{\RM}(s, q) = \protocol_i(s)$, for each agent $i \in \Ag$;
		\item
			$\trnFun^{\RM}((s, q), \jact) = (\trnFun(s, \jact), \delta(s, q))$;
		\item 
			$\wFun_{i}^{\RM}(s, q) = \wFun_{i}(s) + \tau_{i}(s, q)$;
		\item 
			$\wFun_{g}^{\RM}(s, q) = \wFun_{g}(s) - \size{\tau(s, q)}$~\footnote{By $\size{\vec{v}} = \sum_{i \in \Ag}\card{v_i}$ we denote the classic Manhattan distance.}.
	\end{inparaenum}
	
	
	For a given natural number $\beta \in \SetN$, a $\beta$-RM, denoted $\RM_{\beta}$, is RM such that $\size{\tau(s,q)} \leq \beta$ for each $(s, q) \in \St \times Q$.
	In this paper, we consider a \emph{budget} $\beta$ being fixed and restrict our attention only to $\beta$-RMs.
	
	
	\begin{definition}[Global payoff improvement problems]
		For a given game $\Game$, a budget $\beta$, and a threshold $\Delta$.
		The \emph{global payoff weak improvement} problem consists in deciding whether there exists a $\beta$-RM $\RM$ such that:
		
		\begin{center}
			$\bestNE(\Game \implement \RM) - \bestNE(\Game) > \Delta$.
		\end{center}
		
	The \emph{global payoff strong improvement} problem consists in deciding whether there exists a $\beta$-RM $\RM$ such that:
	
	\begin{center}
		$\worstNE(\Game \implement \RM) - \worstNE(\Game) > \Delta$.
	\end{center}
	
	\end{definition}	

	Henceforth, for simplicity, we will use the term \emph{improvement problem} to refer to the global payoff improvement problem.
	
	
	At this point, it is important to note that the optimal values of $ \bestNE $ and $ \worstNE $ may not be achievable with finite-state strategies and reward machines. As such, to guarantee termination, we compute the approximate values instead. Moreover, our approach allows the values to be approximated to an arbitrary level of precision. We discuss this in detail in \Cref{sec:solving}.
	
	
	\paragraph*{Reward Machines vs Subsidy Schemes.}
		As previously mentioned, the reward model in this paper is a generalisation of the one considered in~\cite{GNPW19}, which is referred to as a \emph{subsidy scheme} in that paper.
		A subsidy scheme is defined as a function $ \kappa : \St \to \SetN^n $.
		This can be trivially expressed by a reward machine $ \RM = (Q^{\RM},q_{0}^{\RM},\delta^{\RM},\tau^{\RM}) $ where $ Q^{\RM} = \{q\}, q_{0}^{\RM} = q$, and for all $ \state \in \St, \delta^{\RM}(s,q) = q, \tau^{\RM}(s,q) = \kappa(s)$.
		In other words, subsidy schemes belong to the subclass of ``memoryless'' reward machines\footnote{We note that the semantics of the budget used here is slightly different to the one used in \cite{GNPW19}. In this work budget can be thought as ``capacity'' of additional reward in each time step, whereas in \cite{GNPW19} it is the total ``commitment'' of reward in the game.}. However, there are some cases in which memory is required. To illustrate this, consider the following simple example.
		
		\begin{figure}[t]
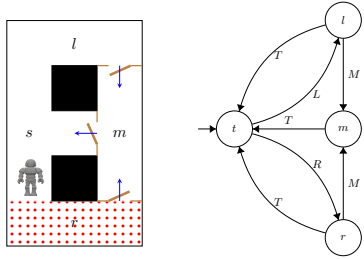

			\begin{center}
				\NewExample	
			\end{center}
			\caption{Graphical representation (left) and game arena (right) for \Cref{ex:1}.}
			\label{fig:ex1}
		\end{figure}

		\begin{example}\label{ex:1}
			Consider a scenario where a robot is situated in an environment shown in \Cref{fig:ex1} left, in which there are four locations $t, l, r, m$. 
			The robot can move from one location to another and is not allowed to stay in the same location for two consecutive time steps.
			There are three doors separating the locations, and they can only be passed according to their respective arrows. For instance, the robot can only move from $m$ to $t$ through the middle door, and not the other direction.
			Thus, the robot can reach $m$ from $t$ only through $ l $ or $ r $. 
			However, location $ r $ is still under maintenance, and it is best to avoid passing through it.
			Suppose that the designer wants to incentivise the robot to deliver goods from $t$ to $ m $ infinitely often.
			We can model this as a game $ \Game $ with $ \Ag = \{1\} $. 
			The game graph is shown in \Cref{fig:ex1} right. In each state, the actions available to the player correspond to the outgoing edges and their respective labels. 
			Let $t$ be the initial state, and $ \wFun_{1}(v) = 0 $ for all $ v \in \St = \{t, l, m, r \} $. Moreover, let $ \wFun_{g}(t) = \wFun_{g}(r) = 0, \wFun_{g}(l) = 1, \wFun_{g}(m) = 2 $; the designer receives reward of $ 2 $ when the robot visits $ m $ from $t$, furthermore, she receives extra reward of $ 1 $ when the robot uses corridor $ l $. Suppose that $ \beta = 1 $. Observe that $ \worstNE(\Game) = 0 $ corresponding to the sequence $ p(t, r)^{\omega} $ for some finite prefix $ p $.
			Suppose that we want to synthesise $ \RM $ such that given $ \Delta = \frac{1}{2} $, the strong improvement problem returns a positive answer. That is, $ \worstNE(\Game \implement \RM) - \worstNE(\Game) > \frac{1}{2} $. 
		\end{example}

		\begin{wrapfigure}[9]{r}{0.4\columnwidth}
			\centering
			\vspace{-1em}
			\RMExample
			\vspace{-1em}
			\caption{Reward machine $ \RM $.}
			\label{fig:rm}
		\end{wrapfigure}
		A reward machine that satisfies the constraint in \Cref{ex:1} is as follows: only give rewards of $ 1 $ when the robot visits $ m $ from $ l $. More formally, the reward machine is shown in \Cref{fig:rm} where $ \tau^\RM_{1}(q_1,m) = 1 $ and $ \tau^\RM_{1}(q, t) = 0 $ for all $ (q, t) \neq (q_1,m) $.
		The set of Nash equilibria in $ (\Game \implement \RM) $ corresponds to the sequence $ p(t, l, m)^{\omega} $ for some finite prefix $ p $. As such, we have $ \worstNE(\Game \implement \RM) = \frac{2}{3} $.
		Observe that such an incentive requires memory, since it needs to remember which path leading to $ m $ is taken by the robot. This is not possible with memoryless reward machine.

\section{Reward Engineering}

\textcolor{red}
{
}

In this and the next section, we show how to solve global payoff improvement by constructing an \emph{auxiliary game} that allows to look at the problem as an equilibrium verification per se.
More specifically, such construction regards reward machines as the strategies of a designated agent in the game, whose weight function corresponds to the global weights of the original game updated with the rewards spent on the others at each iteration.
First we provide the definition of such auxiliary game, which is inspired from the constructions given in~\cite{perelli19, ADLP22}.

\begin{definition}
	Given a game $ \Game $ and a budget $ \beta \in \Nat $, we define its \textit{auxiliary} game $ \Game' = \tuple{\Ag', \Ac', \St', \instate', (\protocol_i)_{i \in \Ag'}, \trnFun', (\wFun_{i}')_{i \in \Ag'}} $, where 
	\begin{inparaenum}[(i)]
		\item $ \Ag' = \{0\} \cup \Ag, \Ac' = \Ac \cup \beta^n, \St' = \St \times \beta^n, \instate' = (\instate,\vec{0})$;
		\item $ \trnFun'((s,\vec{v}), (\jact,\vec{v}')) = (\trnFun(s,\jact), \vec{v}') $;
			\item $ \protocol_i'(s,\vec{v}) = \protocol_i(s), i \in \Ag $;
			\item $ \protocol_0'(s,\vec{v}) = \{ \vec{v} : \size{\vec{v}} \leq \beta \} $;
			\item $ \wFun_{i}'(s,\vec{v}) = \wFun_i(s) + \vec{v}_{i} $;
			\item $ \wFun_{0}' = \wFun_{g} - \Vert \vec{v} \Vert $.
	\end{inparaenum}
\end{definition}

Intuitively, we are adding agent $0$ to the original game $\Game$, whose actions are $n$-dimensional vectors representing the possible rewards assigned to every other agent.
All the other components of the auxiliary game are defined accordingly.
The protocol function remains the same for every original agent, whereas the one for agent $0$ prescribes that the amount of reward distributed to the agents at each iteration does not exceeds the budget $\beta$.
The set of states is augmented to record the amount of reward received by each agent, which is then reflected in the corresponding weight function $\wFun_{i}'$.
Finally, the global weight function is updated by subtracting the amount of reward established by agent $0$ in the current iteration.

In the next two constructions, we show how to transform a $\beta$-RM for $\Game$ into a strategy for agent $0$ and viceversa.

\begin{construction}[RM to Strategy]
	\label{constr:rm-strat}
	Given a RM $ \RM = \tuple{Q_{\RM},q^0_{\RM},\delta_\RM,\tau_\RM} $ of $ \Game \implement \RM $, we define the strategy of player $ 0 $ in $ \Game' $ as $ \strElm_{\RM} = \tuple{ T_0, t_0^0, \St', \gamma_{0}, \rho_{0} } $ where
	$ T_0 = Q_{\RM}, t_0^0 = q^0_{\RM} $, and the internal transition and action functions defined as
	\begin{itemize}
		\item $ \gamma_{0}((s,\vec{v}),t) = \delta_\RM(s,t) $
		\item $ \rho_{0}((s,\vec{v}),t) = \tau_\RM(s,t) $
	\end{itemize}
	for every $ (s,\vec{v}) \in \St' $ and $ t \in T_0 $.
\end{construction}

Intuitively, the strategy $\strElm_{\RM}$ uses the same internal states of the RM $\RM$, while the transition and action functions of $\strElm_{\RM}$ are defined by modifying those of $\RM$ to match with the types required to be considered a strategy for $0$ in $\Game'$.
Such construction can be reverted by carefully modifying the types, in order to move from a strategy of agent $0$ in $\Game'$ to a RM for $\Game$, as it is shown in the following.

\begin{construction}[Strategy to RM]
	\label{constr:strat-rm}
	Given a strategy $ \strElm_{0} = \tuple{ T_0, t_0^0, \St', \gamma_{0}, \rho_{0} } $ in $ \Game' $, we define the RM for $\Game$ as $ \RM_{\strElm_{0}} = \tuple{Q_{\RM_{\strElm_{0}}},q^0_{\RM_{\strElm_{0}}},\delta_{\RM_{\strElm_{0}}},\tau_{\RM_{\strElm_{0}}}} $ where
	$ Q_{\RM_{\strElm_{0}}} = T \times \beta^{n}, q^0_{\RM_{\strElm_{0}}} = (t_0^0, \vec{0}) $, and the transition and reward functions defined as
	\begin{itemize}
		\item $ \delta_{\RM_{\strElm_{0}}}(s, (t,\vec{v})) = ( \gamma_{0}((s,\vec{v}),t), \rho_{0}((s,\vec{v}),t) ) $
		\item $ \tau_{\RM_{\strElm_{0}}}(s,(t,\vec{v})) = \rho_{0}((s,\vec{v}),t) $
	\end{itemize}
	for every $ s\in \St $ and $ (t,\vec{v}) \in Q_{\RM_{\strElm_{0}}} $.
\end{construction}

We write $ \pi_{\upharpoonright \St} $ to denote the sequence in $ \St^{\omega} $ obtained from $ \pi $ by projecting the component in $ \St $ and $ \tau(\pi) $ the sequence in $ (\SetZ^n)^{\omega} $ obtained from $ \wFun_{1}^{\RM}(\pi),\dots,\wFun_{n}^{\RM}(\pi)$.

In the following Lemma, we prove that the constructions presented above correctly translate RMs into strategies and viceversa, meaning that they make a connection between paths of $\Game \implement \RM$ and outcome of $\Game'$ when agent $0$ uses the corresponding strategy and viceversa.

\begin{lemma}
	\label{lmm:path-equivalence}
	For a given $ \Game \implement \RM $ and its associated auxiliary game $ \Game' $ the following hold: 
	
	\begin{enumerate}
		\item[(1)] for every path $ \pi \in \Paths_{\Game \implement \RM}((\instate,q^0))$, there is a path $ \pi' = (\pi_{\upharpoonright \St}, \tau(\pi)) \in \out_{\Game'}((\instate,\vec{0}), \strElm_{\RM}) $, and $ \wFun_{i}^{\RM}(\pi) = \wFun_{i}'(\pi') $ for all $ i \in \Ag $ and $ \wFun_{g}^{\RM}(\pi) = \wFun_{0}'(\pi') $;
		
		\item[(2)] for every path $ \pi' \in \out_{\Game'}((\instate,\vec{0}),\strElm_{0})$, there is a path $ \pi = (\pi_{\upharpoonright \St}', \delta_{\RM_{\strElm_{0}}}(\pi')) \in \Paths_{\Game \implement \RM_{\strElm_{0}}}((\instate,q^0)) $, and $ \wFun_{i}^{\RM_{\strElm_{0}}}(\pi) = \wFun_{i}'(\pi') $ for all $ i \in \Ag $ and $ \wFun_{g}^{\RM_{\strElm_{0}}}(\pi) = \wFun_{0}'(\pi') $.
	\end{enumerate}
	
\end{lemma}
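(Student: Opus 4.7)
The plan is to proceed by induction on the length of finite prefixes of the paths, using the matched constructions to simulate each side of the correspondence. The overall strategy is the same in both directions: I define the candidate path on the other side, maintain an invariant tying together the internal state of the strategy (respectively, of the RM) with the state-component recorded at each step, propagate this invariant across transitions, and separately verify that the per-step weights line up.

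For part~(1), fix a path $\pi = (s_0,q_0)(s_1,q_1)\cdots$ in $\Game \implement \RM$ and set $\pi' := (\pi_{\upharpoonright \St}, \tau(\pi))$ as in the statement. I maintain the invariant that, at each step $k$, the internal state of $\strElm_\RM$ is $q_k$ and the reward component $\vec{v}_k$ of the state $\pi'_k$ in $\Game'$ is the reward dispensed by the RM at the previous step, with $\vec{v}_0 = \vec{0}$. The base case is immediate from $t^0_0 = q^0_\RM = q_0$ together with the initial state convention of $\Game'$. For the inductive step, \Cref{constr:rm-strat} yields $\gamma_0((s_k,\vec{v}_k),q_k) = \delta_\RM(s_k,q_k) = q_{k+1}$ and $\rho_0((s_k,\vec{v}_k),q_k) = \tau_\RM(s_k,q_k)$; combined with the remaining agents playing the same actions as in $\pi$ (which is always possible because the protocol functions are preserved), this realises precisely the $\Game'$-transition to $\pi'_{k+1}$, so $\pi' \in \out_{\Game'}((\instate,\vec{0}),\strElm_\RM)$. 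The weight equalities $\wFun_i^\RM(\pi) = \wFun_i'(\pi')$ and $\wFun_g^\RM(\pi) = \wFun_0'(\pi')$ then reduce to unwinding the definitions of the two weight functions against the invariant.

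Part~(2) is symmetric and uses \Cref{constr:strat-rm}. The crucial design choice there is $Q_{\RM_{\strElm_0}} = T \times \beta^n$, which packs the strategy's internal state together with the last reward it dispensed; this is exactly the information needed so that $\delta_{\RM_{\strElm_0}}$ and $\tau_{\RM_{\strElm_0}}$ replicate $\strElm_0$ step by step along the projection of $\pi'$.

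The step I expect to be most delicate is the precise alignment of rewards between the two formalisms. In $\Game \implement \RM$, the reward $\tau_\RM(s,q)$ is already reflected in the weight at the very state $(s,q)$ where it is computed; in $\Game'$, by contrast, the reward chosen by agent $0$ at step $k$ only becomes the second component of the state at step $k+1$ and is charged then. A position-wise comparison therefore exhibits a single-step lag between the two weight sequences. This lag is harmless, however, because the difference in any partial sum telescopes to a value bounded uniformly by $\beta$, so the induced mean-payoff values — which are what the rest of the development relies on — coincide in the limit.
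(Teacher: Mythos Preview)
Your inductive argument mirrors the paper's approach closely: both proceed by showing that every finite prefix of the candidate path extends compatibly with the relevant strategy, and both defer item~(2) to symmetry. You go further than the paper in two respects. First, you make the internal-state invariant $t_k = q_k$ explicit, whereas the paper leaves this implicit in the appeal to \Cref{constr:rm-strat}. Second, and more substantively, you flag the one-step reward lag between the two weight sequences: in $\Game'$ the reward chosen by agent~$0$ at step $k$ becomes the second component of the state only at step $k+1$ and is charged then, whereas in $\Game \implement \RM$ the reward $\tau_\RM(s_k,q_k)$ is already reflected in the weight at position $k$. The paper's proof simply asserts that the weight equality ``immediately follows from the definitions,'' glossing over this shift. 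Your telescoping argument --- that the discrepancy in partial sums is uniformly bounded by $\beta$ and hence vanishes under $\MP$ --- is the correct resolution, and it is precisely what the downstream uses of the lemma (Lemmas~\ref{lmm:strategy-to-auxiliary} and~\ref{lmm:auxiliary-to-strategy}, which compare payoffs rather than raw weight sequences) actually require.
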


\begin{proof}
	
	We prove the first item only, as the other has a similar proof.
	
	Observe that the path $\pi'$ is uniquely identified from $\pi$.
	Moreover, from the definitions of $\RM_{\strElm_{0}}$ and $\wFun_{0}'$, it immediately follows that $\wFun_{g}^{\RM}(\pi) = \wFun_{0}'(\pi')$.
	Therefore, we only need to prove that $\pi'$ belongs to $\out_{\Game'}((\instate,\vec{0}), \strElm_{\RM})$.
	We do it by induction on $k \in \SetN$ by showing that every prefix $\pi'_{\leq k}$ of $\pi'$ can be extended compatibly with $\strElm_{\RM}$.
	
	For the base case $k = 0$, we have that $\pi'_{\leq 0} = (s^{0}, \vec{0} )$ which is trivially extendable to any path in $\out_{\Game'}((\instate,\vec{0}), \strElm_{\RM})$.
	
	For the induction case $k > 0$, assume that $\pi'_{\leq k}$ is extendable to a path in $\out_{\Game'}((\instate,\vec{0}), \strElm_{\RM})$.
	Then, consider $\pi_{\leq k}$ and $\jact$ the joint action such that $\pi_{k + 1} = \trnFun(\pi_{k}, \jact)$, which exists since $\pi \in \Paths_{\Game \implement \RM}((\instate,q^0))$.
	Clearly, it holds that $\pi_{\geq k + 1}' = \trnFun'(\pi_{\geq k}', (\strElm_{\RM}(\pi_{\geq k}', \jact)))$, which makes $\pi_{\leq k + 1}'$ extendable compatibly with $\strElm_{\RM}$.
%
%
\end{proof}

Similarly to the correspondence between RMs and agent~$0$'s strategies, a connection between strategies of any other agent $i$ in $\Game \implement \RM$ and $\Game'$ exists.
In other words, once a RM $\RM$ and its corresponding strategy $\strElm_{\RM}$ are fixed, every strategy $\strElm_{i}$ for agent $i$ in $\Game \implement \RM$ can be translated into a strategy $\strElm_{i}'$.

\begin{construction}[$ \Game \implement \RM $ to $ \Game' $]
	\label{con:strat-rm-to-auxiliary}
	For a game $ \Game \implement \RM $ and a strategy $ \strElm_{i} = \tuple{T_i, t_i^0, \gamma_{i}, \rho_{i}} $ in it, we define a  strategy $ \hat{\strElm_{i}} = \tuple{\hat{T_i}, \hat{t}_i^0, \hat{\gamma_{i}}, \hat{\rho_{i}}} $ in the corresponding game $ \Game' $ as follows:
	\begin{itemize}
		\item $ \hat{T_i} = T \times Q^{\RM} $, and $ \hat{t}_i^0 = (t_i^0,q_0^{\RM}) $;
		\item $ \hat{\gamma_{i}}: (T \times Q^{\RM}) \times (\St \times \beta^n) \to (T \times Q^{\RM}) $ such that $ \hat{\gamma_{i}}((t,q), (s,\vec{v})) = (\gamma_{i}(t,(s,q)), \delta_\RM(s,q)) $;
		\item $ \hat{\rho_{i}}: (T \times Q^{\RM}) \times (\St \times \beta^n) \to (T \times Q^{\RM}) $ such that $ \hat{\rho_{i}}((t,q), (s,\vec{v})) = \rho_{i}(t,(s,q)) $.
	\end{itemize}

	By $\theta_{\Game \implement \RM}(\strElm_{i}) = \hat{\strElm_{i}}$ we denote the strategy for player $i$ in $\Game'$ obtained from $\strElm_{i}$ by applying the construction above.
\end{construction}

On the other hand, once a strategy $\strElm_{0}$ for agent $0$ in $\Game'$ and the corresponding RM $\RM_{\strElm_{0}}$ are fixed, the translation from strategies for agent $i$ in $\Game'$ to strategies in $\Game \implement \RM_{\strElm_{0}}$ is possible.

\begin{construction}[$ \Game' $ to $ \Game \implement \RM $]
	\label{con:strat-auxiliary-to-rm}
	For a game $ \Game' $ and a strategy $ \hat{\strElm_{i}} = \tuple{\hat{T_i}, \hat{t}_i^0, \hat{\gamma_{i}}, \hat{\rho_{i}}} $, we define a strategy $ \strElm_{i} = \tuple{T_i, t_i^0, \gamma_{i}, \rho_{i}} $, in the corresponding game $ \Game \implement \RM $ as follows:
	\begin{itemize}
		\item $ \hat{T_i} = T \times Q^{\RM} $, and $ \hat{t}_i^0 = (t_i^0,q_0^{\RM}) $;
		\item $ \hat{\gamma_{i}}: (T \times Q^{\RM}) \times (\St \times \beta^n) \to (T \times Q^{\RM}) $ such that $ \hat{\gamma_{i}}((t,q), (s,\vec{v})) = (\gamma_{i}(t,(s,q)), \delta_\RM(s,q)) $;
		\item $ \hat{\rho_{i}}: (T \times Q^{\RM}) \times (\St \times \beta^n) \to (T \times Q^{\RM}) $ such that $ \hat{\rho_{i}}((t,q), (s,\vec{v})) = \rho_{i}(t,(s,q))$.
	\end{itemize}

	By $\theta_{\Game'}(\hat{\strElm_{i}}) = \strElm_{i}$ we denote the strategy for player $i$ in $\Game \implement \RM$ obtained from $\hat{\strElm_{i}}$ by applying the construction above.
\end{construction}

The following two lemma shows that the connection among strategies in between the games also preserves the payoff of agents.

\begin{lemma}
	\label{lmm:strategy-to-auxiliary}
	For a given game $\Game$, RM $\RM$, and strategy profile $\strpElm \in \StrSet(\Game \implement \RM)$, it holds that
	
	\begin{center}
		$\pay_i^{\Game \implement \RM}(\strpElm) = \pay_i^{G'}(\strElm_{\RM}, \theta_{\Game \implement \RM}(\strpElm))$
	\end{center}
	
\end{lemma}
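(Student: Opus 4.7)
The plan is to reduce the claim to \Cref{lmm:path-equivalence} by showing that the outcome $\pi(\strpElm)$ generated in $\Game \implement \RM$ by $\strpElm$ and the outcome $\pi' = \pi((\strElm_{\RM}, \theta_{\Game \implement \RM}(\strpElm)))$ generated in $\Game'$ are in precise correspondence, namely $\pi' = (\pi_{\upharpoonright \St}, \tau(\pi))$. Once that correspondence is established, item~(1) of \Cref{lmm:path-equivalence} immediately yields $\wFun_{i}^{\RM}(\pi) = \wFun_{i}'(\pi')$ for every agent $i \in \Ag$, and since payoffs are defined as $\MP$ applied to these weight sequences, the desired equality of payoffs follows.

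First I would unfold \Cref{constr:rm-strat} and \Cref{con:strat-rm-to-auxiliary}: agent~$0$'s strategy $\strElm_{\RM}$ stores the RM state in its internal memory, and each $\hat{\strpElm_j} = \theta_{\Game \implement \RM}(\strpElm_j)$ stores the pair $(t_j, q)$ of the original internal state and the RM state. The decisive observation is that the state component $(s,\vec{v})$ of $\Game'$, together with the internal memories $(t_j, q)$ of the translated strategies and the memory $t$ of $\strElm_{\RM}$, together encode exactly the same information as the pair $(\St \times Q^{\RM})$ states visited in $\Game \implement \RM$ together with the original strategies' internal states. In particular, the $\vec{v}$ component coincides with the reward vector $\tau_{\RM}(s,q)$ chosen at the previous step, which is what the translated weight function $\wFun'_{i}(s,\vec{v}) = \wFun_{i}(s) + \vec{v}_i$ needs to replicate $\wFun^{\RM}_{i}(s,q) = \wFun_{i}(s) + \tau_{i}(s,q)$.

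The bulk of the work is a short induction on the prefix length $k$. The base case is immediate since both plays start at $(\instate, q_0)$ and $(\instate, \vec{0})$, respectively, and the second coordinate has no effect on agents' actions at the initial step. For the inductive step, assuming the two prefixes already agree up to position $k$ in the sense described above, both games prescribe the same action for each agent $i \geq 1$: in $\Game \implement \RM$ agent $i$ reads $(s_k, q_k)$ and plays $\rho_i(t_k^i, (s_k, q_k))$, while in $\Game'$ agent $i$ reads $(s_k, \vec{v}_k)$ with internal memory $(t_k^i, q_k)$ and plays $\hat{\rho}_i((t_k^i, q_k), (s_k, \vec{v}_k)) = \rho_i(t_k^i, (s_k, q_k))$. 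Agent~$0$ simultaneously outputs $\tau_{\RM}(s_k, q_k)$, so the next $\Game'$-state becomes $(\trnFun(s_k, \jact_k), \tau_{\RM}(s_k, q_k))$, matching the pair $(\trnFun(s_k, \jact_k), \delta_{\RM}(s_k, q_k))$ reached in $\Game \implement \RM$ once we project the second coordinate through $\tau(\cdot)$.

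The only subtlety, and where one must be careful with the construction, is that the second coordinate of $\Game'$-states records the \emph{previous} reward output by agent~$0$, so the alignment is off-by-one with respect to the RM state $q_k$ stored inside the strategies' memories. Once this indexing is handled correctly, the induction closes, the path correspondence $\pi' = (\pi_{\upharpoonright \St}, \tau(\pi))$ follows, and an application of \Cref{lmm:path-equivalence}(1) concludes the proof by transporting weights and hence mean-payoffs across the two games.
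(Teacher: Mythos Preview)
Your proposal is correct and follows essentially the same approach as the paper: establish that the outcome of $(\strElm_{\RM}, \theta_{\Game \implement \RM}(\strpElm))$ in $\Game'$ is precisely the path $\pi' = (\pi_{\upharpoonright \St}, \tau(\pi))$ associated to $\pi(\strpElm)$ by \Cref{lmm:path-equivalence}(1), and then transport the weight sequences (hence the mean-payoffs). The paper's own proof is a terse two-line sketch that invokes \Cref{con:strat-rm-to-auxiliary} and \Cref{lmm:path-equivalence}(1) without spelling out the induction or the off-by-one bookkeeping you identify, so your version simply fills in those details.
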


\begin{proof}
	Observe that the path $\pi = \pi(\strpElm, (\instate, q^0))$ belongs to the set $\Paths_{\Game \implement \RM}((\instate,q^0))$.
	Moreover, by Construction~\ref{con:strat-rm-to-auxiliary}, the path $\pi' = \pi((\strElm_{\RM}, \theta_{\Game \implement \RM}(\strpElm)), (\instate, q^0))$ is exactly the one such that $\wFun_{g}^{\RM}(\pi) = \wFun_{0}'(\pi')$ as proved in the Item 1 of Lemma~\ref{lmm:path-equivalence}.
	This straightforwardly shows that $\pay_i^{\Game \implement \RM}(\strpElm) = \pay_i^{G'}(\strElm_{\RM}, \theta_{\Game \implement \RM}(\strpElm))$.
\end{proof}

\begin{lemma}
	\label{lmm:auxiliary-to-strategy}
	For a given game $\Game$, a strategy $\strElm_{0} \in \StrSet_{0}(\Game')$, and strategy profile $\strpElm \in \StrSet^{\Game \implement \RM}$, it holds that
	
	\begin{center}
		$\pay_i^{\Game'}(\strElm_{0}, \strpElm) = \pay_i^{G \implement \RM_{\strElm_{0}}}(\theta_{\Game'}(\strpElm))$
	\end{center}
	
\end{lemma}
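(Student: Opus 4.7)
The plan is to mirror the proof of Lemma~\ref{lmm:strategy-to-auxiliary}, exchanging the roles of the two games. Specifically, I will identify the unique path induced by $(\strElm_{0}, \strpElm)$ in $\Game'$, apply Item~2 of Lemma~\ref{lmm:path-equivalence} to transport it to $\Game \implement \RM_{\strElm_{0}}$, and then verify that this transported path coincides with the outcome path induced by $\theta_{\Game'}(\strpElm)$ in $\Game \implement \RM_{\strElm_{0}}$. Once this path equivalence is established, equality of the payoffs $\pay_i$ follows immediately since, by the same item of Lemma~\ref{lmm:path-equivalence}, the weights $\wFun_{i}'$ along $\pi'$ agree stepwise with the weights $\wFun_{i}^{\RM_{\strElm_{0}}}$ along $\pi$, and the mean-payoff aggregation only depends on that weight sequence.

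In more detail, first I would set $\pi' = \pi((\strElm_{0}, \strpElm), (\instate, \vec{0})) \in \out_{\Game'}((\instate,\vec{0}),\strElm_{0})$. By Item~2 of Lemma~\ref{lmm:path-equivalence}, the projection $\pi = (\pi'_{\upharpoonright \St}, \delta_{\RM_{\strElm_{0}}}(\pi'))$ is a well-defined element of $\Paths_{\Game \implement \RM_{\strElm_{0}}}((\instate, q^0))$, and the weight sequences match, i.e.\ $\wFun_{i}^{\RM_{\strElm_{0}}}(\pi) = \wFun_{i}'(\pi')$ for every $i \in \Ag$. Next, I would show that $\pi$ is precisely the outcome $\pi(\theta_{\Game'}(\strpElm), (\instate, q^0))$ in $\Game \implement \RM_{\strElm_{0}}$. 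This is done by induction on the length of the prefix, using Construction~\ref{con:strat-auxiliary-to-rm}: at each step $k$, the action chosen by $\theta_{\Game'}(\strpElm)_i$ at the state $(\state_k, q_k)$ of $\Game \implement \RM_{\strElm_{0}}$ is, by definition of $\rho_{i}$ in that construction, exactly the action $\hat{\rho_{i}}((t_k,q_k),(\state_k,\vec{v}_k))$ chosen by $\hat{\strElm}_{i} = \strpElm_{i}$ along $\pi'$, so the induced transitions coincide and the augmented component tracks $q_k$ according to $\delta_{\RM_{\strElm_{0}}}$ in lockstep with the $\vec{v}_k$-component on the $\Game'$ side.

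The combination of these two facts gives $\pay_i^{\Game \implement \RM_{\strElm_{0}}}(\theta_{\Game'}(\strpElm)) = \MP(\wFun_{i}^{\RM_{\strElm_{0}}}(\pi)) = \MP(\wFun_{i}'(\pi')) = \pay_i^{\Game'}(\strElm_{0}, \strpElm)$, which concludes the proof. The main obstacle is the bookkeeping in the inductive step: one must carefully match the internal state $(t_k, q_k)$ maintained by $\theta_{\Game'}(\strpElm)_i$ in $\Game \implement \RM_{\strElm_{0}}$ with the pair consisting of the internal state of $\hat{\strElm}_{i}$ in $\Game'$ together with the reward-machine state reachable by $\RM_{\strElm_{0}}$ from $q^0$ along $\pi'_{\upharpoonright \St}$. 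Once the Mealy states are identified correctly via Construction~\ref{con:strat-auxiliary-to-rm}, the verification that the joint actions agree, and therefore that the transition functions drive the two games through ``the same'' path modulo the trivial relabelling of state components, is a routine unfolding.
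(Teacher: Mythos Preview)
Your proposal is correct and follows exactly the approach indicated by the paper, which merely sketches the proof as being ``similar to the one of Lemma~\ref{lmm:strategy-to-auxiliary}, with the use of Construction~\ref{con:strat-auxiliary-to-rm} and Item~2 of Lemma~\ref{lmm:path-equivalence}.'' Your write-up in fact supplies more detail than the paper's sketch, including the inductive bookkeeping on the Mealy internal states, but the underlying argument is the same.
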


\begin{proof}[Sketch]
	The proof is similar to the one of Lemma~\ref{lmm:strategy-to-auxiliary}, with the use of Construction~\ref{con:strat-auxiliary-to-rm} and Item 2 of Lemma~\ref{lmm:path-equivalence}.
\end{proof}

By having the same set of payoffs, it simply follows from Lemma~\ref{lmm:path-equivalence}, Lemma~\ref{lmm:strategy-to-auxiliary}, and Lemma~\ref{lmm:auxiliary-to-strategy}, that the games $\Game \implement \RM$ and $\Game'$, where agent $0$ is bound to the use of $\strElm_{\RM}$ share the same set of Nash Equilibria.

\begin{theorem}
	\label{thm:nashinvariance}
	For a given game $\Game$ and a budget $\beta$, the two following hold:
	
	\begin{enumerate}
		\item
			For every $\beta$-RM $\RM$ and strategy profile $\strpElm$ in $\Game \implement \RM$, it holds that 
			
			\begin{center}
				$\strpElm \in \NE(\Game \implement \RM)$ iff $(\strElm_{\RM}), \hat{\strpElm}) \in \NE_{0}(\Game')$.
			\end{center}
		
		\item 
			For every strategy profile $(\strElm_{0}, \strpElm)$ in $\Game'$, it holds that
			
			\begin{center}
				$(\strElm_{0}, \strpElm) \in \NE_{0}(\Game')$ iff $\theta_{\Game \implement \RM_{\strElm_{0}}}(\strpElm) \in \NE(\Game \implement \RM_{\strElm_{0}})$
			\end{center}
				 
	\end{enumerate}	
\end{theorem}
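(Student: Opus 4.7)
The plan is to derive both items from the payoff-preservation Lemmas~\ref{lmm:strategy-to-auxiliary} and~\ref{lmm:auxiliary-to-strategy} together with Constructions~\ref{con:strat-rm-to-auxiliary} and~\ref{con:strat-auxiliary-to-rm}, which provide payoff-equivalent counterparts for strategies across the two games. Since $0 \notin \Ag$, the condition $\NE_{0}(\Game')$ only concerns deviations by the original players $i \in \Ag$, which is exactly the set quantified over in $\NE(\Game \implement \RM)$. Hence the two equilibrium notions should align as soon as beneficial deviations can be transported back and forth without changing payoffs.

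For item~(1), I prove both directions by contrapositive. Suppose $(\strElm_{\RM}, \hat{\strpElm}) \notin \NE_{0}(\Game')$; then some $i \in \Ag$ has a strategy $\hat{\strElm_i'}$ in $\Game'$ with $\pay_i^{\Game'}(\strElm_{\RM}, \hat{\strpElm}_{-i}, \hat{\strElm_i'}) > \pay_i^{\Game'}(\strElm_{\RM}, \hat{\strpElm})$. Applying Construction~\ref{con:strat-auxiliary-to-rm} (with the fixed $\RM$) yields $\strElm_i' = \theta_{\Game'}(\hat{\strElm_i'})$ in $\Game \implement \RM$, and invoking Lemma~\ref{lmm:auxiliary-to-strategy} on both profiles transports the strict inequality to $\pay_i^{\Game \implement \RM}(\strpElm_{-i}, \strElm_i') > \pay_i^{\Game \implement \RM}(\strpElm)$, contradicting $\strpElm \in \NE(\Game \implement \RM)$. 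The converse is symmetric: a beneficial deviation $\strElm_i'$ in $\Game \implement \RM$ is mapped through Construction~\ref{con:strat-rm-to-auxiliary} to $\theta_{\Game \implement \RM}(\strElm_i')$ in $\Game'$, and Lemma~\ref{lmm:strategy-to-auxiliary} transports the inequality to the auxiliary game.

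Item~(2) follows by an analogous argument, with the roles of the two games swapped. A deviation for player $i \in \Ag$ in $(\strElm_{0}, \strpElm)$ is sent through Construction~\ref{con:strat-auxiliary-to-rm} to a deviation in $\Game \implement \RM_{\strElm_{0}}$, and conversely any deviation in $\Game \implement \RM_{\strElm_{0}}$ is lifted via Construction~\ref{con:strat-rm-to-auxiliary} to a deviation in $\Game'$; in both directions Lemmas~\ref{lmm:strategy-to-auxiliary} and~\ref{lmm:auxiliary-to-strategy} guarantee the same strict payoff ordering, so beneficial deviations correspond bijectively.

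The main subtlety to watch is that Constructions~\ref{con:strat-rm-to-auxiliary} and~\ref{con:strat-auxiliary-to-rm} are not literal inverses at the level of strategy data: they enlarge or project the internal memory by the reward-machine state and the reward-vector component. One should therefore not attempt to argue by composing them as identities. Instead, the argument stays at the level of outcome paths: each translated strategy induces the same unique outcome path (via Lemma~\ref{lmm:path-equivalence}) and the same sequence of individual weights, and that is all that is needed for the payoff lemmas to close the case. Once this is in place, the two bi-implications follow immediately and uniformly.
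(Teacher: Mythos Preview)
Your proposal is correct and follows essentially the same approach as the paper, which simply records (in the sentence immediately preceding the theorem) that the result follows from Lemma~\ref{lmm:path-equivalence}, Lemma~\ref{lmm:strategy-to-auxiliary}, and Lemma~\ref{lmm:auxiliary-to-strategy}; you have merely made the standard contrapositive deviation-transport argument explicit. Your closing remark about the constructions not being literal inverses, and therefore arguing at the level of outcome paths rather than strategy identities, is exactly the right caveat and is implicit in the paper's reliance on Lemma~\ref{lmm:path-equivalence}.
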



\section{Solving Improvement Problems}
\label{sec:solving}

In this section, we present a technique for solving the weak and strong improvement problems. We also demonstrate how to synthesise the RM, if it exists. With the definition of the improvement problems, it makes sense to start with the problems of computing $ \worstNE $ and $ \bestNE $.
To this end, we introduce \textit{NE threshold problem}~\cite{UW11} that we will use as a subroutine in our algorithms. This problem asks whether there exists a NE in $ \Game $, such that the payoffs for the players fall between two vectors $ \vec{x} $ and $ \vec{y} $.

\begin{definition}[NE Threshold Problem]
	\label{def:ne-thresh}
	Given a game $ \Game $ and vector $ \vec{x}, \vec{y} \in (\Rat \cup \{ \pm \infty \})^n $, decide whether there is $ \strpElm \in \NE(\Game) $ with $ x_i \leq \pay_{i}(\strpElm) \leq y_i $ for every $ i \in \Ag $.
\end{definition}
When the players have pure strategies, the NE threshold problem can be solved in \np \cite{UW11}.

We begin with the following observation. For a given game $ \Game $, it holds that $ \MinW_{g}^{\Game} \leq \worstNE(\Game) \leq \MaxW_{g}^{\Game} $ and $ \MinW_{g}^{\Game} \leq \bestNE(\Game) \leq \MaxW_{g}^{\Game} $. Moreover, it also holds that for a given $ \Game' $, we have $ \MinW_{0}^{\Game'} \leq \worstNE(\Game') \leq \MaxW_{0}^{\Game'} $ and $ \MinW_{0}^{\Game'} \leq \bestNE(\Game') \leq \MaxW_{0}^{\Game'} $. As such, by using binary search and the NE threshold problem subroutine, we can compute the values of $ \worstNE $ and $ \bestNE $ for $ \Game $ and $ \Game' $.



\begin{figure}[t]
        \centering
	\RMExampletwo
	\caption{Reward machine $ \RM' $.}
	\label{fig:rm2}
\end{figure}

%
%
%
%

As we previously discussed, the optimal values of $ \worstNE(\Game) $ and $ \worstNE(\Game \implement \RM) $ may not be achievable with finite-state strategies and RMs. To see this, consider again \Cref{ex:1}. Suppose that we have a RM $ \RM' $ shown in \Cref{fig:rm2}, where $ \tau^{\RM'}_{1}(q_5,m) = 1 $ and $ \tau^{\RM'}_{1}(q,s) = 0 $ for all $ (q,s) \neq (q_5,m) $. Intuitively, player 1 is only given a reward of 1 after it finishes two cycles of deliveries. Clearly the set of NE still corresponds to the same sequence $ p(s,l,m)^{\omega} $. However, since now the designer only needs to pay 1 unit for every two cycles, we have $ \worstNE(\Game \implement \RM') = \frac{5}{6} $, which is strictly greater than $ \worstNE(\Game \implement \RM) = \frac{2}{3} $ obtained by the RM in \Cref{fig:rm}. In fact, we can increase the number of cycles needed to be done before giving 1 unit of reward by adding more states in the RM, thus obtaining strictly greater $ \worstNE $ value. Since the size of RM is not bounded, we can do this indefinitely. 
A similar argument can also be given for the optimal value of $ \worstNE(\Game) $, the complete explanation can be found in \Cref{sec:optimal}. Observe that by multiplying $ \pay_{g} $ with $ -1 $, we can also use the example above to analogously reason about $ \bestNE $.

The above arguments shows that the binary search for computing the values of $ \worstNE $ and $ \bestNE $ may not terminate. To ensure termination, we compute approximate values instead. 

\begin{definition}
	Given $ \epsilon > 0 $, an approximate value of $ \worstNE $ (resp. $ \bestNE $) is a value $ a $ such that $ a - \epsilon < o $, where $ o $ is the optimal value of $ \worstNE $ (resp. $ \bestNE $). We refer to such an approximate value as $ \epsilondash\worstNE $ (resp. $ \epsilondash\bestNE $).
\end{definition}

\begin{algorithm}[t]
	\caption{Computing $ \epsilondash\worstNE $}
	\begin{algorithmic}[1]
		\Statex \textbf{input:} $ \Game, \epsilon  $
		\State $ a_1 \leftarrow \MinW_{g}^{\Game}; a_2 \leftarrow \MaxW_{g}^{\Game}$ 
		\While{$ a_2 - a_1 \geq {\epsilon} $}
		\State $ a' \leftarrow \frac{a_1 + a_2}{2}; $
		\If{$ \exists \strpElm \in \NE(\Game), a_1 \leq \pay_{g}(\strpElm) \leq a' $}
		\State $ a_2 \leftarrow a' $
		\Else
		\State $ a_1 \leftarrow a' $
		\EndIf
		\EndWhile
		
		\State \Return $ a_2 $
	\end{algorithmic}
	\label{alg:worstNE}
\end{algorithm}

We provide \Cref{alg:worstNE} for computing $ \epsilondash\worstNE$ given $ \Game $ and $ \epsilon $ encoded in binary. 
%
The check in Line 4 corresponds to the NE threshold problem from \Cref{def:ne-thresh}. Notice that the threshold vectors $ \vec{x}, \vec{y} $ are not explicitly given, as we are not interested in these values. Thus, we fix $ x_i = \MinW_{i}^{\Game}, y_i = \MaxW_{i}^{\Game} $ for each $ i \in \Ag $, i.e., they can be of any possible values.
On the other hand, we are interested in $ \pay_{g} $, which in fact does not correspond to the payoff of any player. However, we can easily modify the underlying procedure for solving the problem in \cite{UW11} to handle this. Specifically, by \cite[Lemmas 14 and 15]{UW11}, we can specify an additional linear equation corresponding to the value of $ \pay_{g} $ being in between $ a_1 $ and $ a' $, thus yielding a procedure that is also in \np.
\Cref{alg:worstNE} can also be used to compute $ \epsilondash\worstNE(\Game') $ with the following adaptation: Line 4 is slightly modified into $ \exists \strpElm \in \NE_{0}(\Game), a_1 \leq \pay_{0}(\strpElm) \leq a' $, that is, the NE set corresponds to the $ 0 $-fixed NE. Just as with $ \pay_{g} $, $ \pay_{0} $ is not the payoff of any player in $ \Ag $. Therefore, we modify the underlying procedure for the NE threshold problem using the same approach as the above.


To compute $ \epsilondash\bestNE $, we can employ a similar technique. We make the following modification to \Cref{alg:worstNE}: in each iteration, instead of checking the left-half part, we check the right-half part (i.e., instead of minimising, we are \textit{maximising}). This is done in Lines 4-8 of the algorithm by checking whether $ \exists \strpElm \in \NE(\Game), a' \leq \pay_{g}(\strpElm) \leq a_2 $. If the check returns true, we set $ a_1 \leftarrow a' $, otherwise $ a_2 \leftarrow a' $. Again, as with $ \worstNE $, we slightly modify Line 4 in order to compute $ \bestNE(\Game') $.

\begin{theorem}
	\label{thm:compute-best-worst}
	Given a game $ \Game $ (resp. $ \Game' $) and $ \epsilon > 0 $, the problems of computing $ \epsilondash\bestNE(\Game) $ and $\epsilondash\worstNE(\Game) $ (resp. $ \epsilondash\bestNE(\Game') $ and $\epsilondash\worstNE(\Game') $) are $ \FP^{\np} $-complete.
\end{theorem}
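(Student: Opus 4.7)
The plan is to prove the two complexity bounds separately, namely membership in $\FP^{\np}$ and $\FP^{\np}$-hardness.

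For the upper bound, I would argue that Algorithm~\ref{alg:worstNE} (and its obvious adaptation maximising rather than minimising, as described above for $\bestNE$, plus the $0$-fixed variant for $\Game'$) already witnesses membership in $\FP^{\np}$. The binary search starts on the interval $[\MinW_{g}^{\Game}, \MaxW_{g}^{\Game}]$, whose width is polynomial in the input size, and halves the interval at each iteration until it is smaller than $\epsilon$. Since $\epsilon$ is encoded in binary, the number of iterations is $O(\log((\MaxW_{g}^{\Game}-\MinW_{g}^{\Game})/\epsilon))$, which is polynomial in $|\Game|+|\epsilon|$. Each iteration performs exactly one oracle call to the (extended) NE threshold problem of Definition~\ref{def:ne-thresh}, which remains in $\np$ via the modification to the procedure of~\cite{UW11} described right after Algorithm~\ref{alg:worstNE} (using their Lemmas 14 and 15 to add the linear constraint on $\pay_g$, respectively on $\pay_0$ for $\Game'$). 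Hence the whole procedure runs in deterministic polynomial time with an $\np$ oracle, placing the four problems in $\FP^{\np}$.

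For the lower bound, I would reduce from a canonical $\FP^{\np}$-complete problem such as computing the maximum number of clauses satisfiable in a Boolean formula (\textsc{MaxSat}) or the lexicographically maximum satisfying assignment. The starting point is the NP-hardness construction for the NE threshold problem from~\cite{UW11}: one can already lift it to show that an oracle for $\bestNE$/$\worstNE$ decides the threshold problem, which gives $\np$-hardness (and symmetrically $\conp$-hardness) but not yet the full $\FP^{\np}$ lower bound. To upgrade, I would wire polynomially many instances of the NP-hard threshold gadget into a single game, weighting the individual gadgets by geometrically decreasing powers of two so that the $k$-th gadget contributes roughly $2^{-k}$ to the global mean-payoff of every NE-sustained outcome. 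In this way the binary expansion of $\bestNE(\Game)$ (respectively $\worstNE(\Game)$) encodes, bit by bit, the answers to a polynomial sequence of NP queries, so that the approximate value returned with $\epsilon$ of polynomial bit length is sufficient to reconstruct the output of any $\FP^{\np}$ computation via polynomial post-processing. The construction for $\Game'$ is essentially the same, since the agent $0$ is free to play any constant strategy in the gadget.

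The main obstacle will be the lower bound. Getting $\np$/$\conp$-hardness from the threshold problem is straightforward, but upgrading this to a genuine $\FP^{\np}$-hardness requires a reduction whose single numerical output carries the answers to polynomially many independent NP queries. The delicate part is designing weight functions that (i) keep the contribution of each embedded gadget cleanly separated in the binary expansion of the mean-payoff limit, (ii) keep the absolute values of weights polynomially bounded so that a single $\epsilon$ of polynomial bit length suffices to read off all the encoded bits, and (iii) guarantee that the equilibria of the composed game correspond exactly to independent equilibria of the component gadgets, so that no unintended correlation between the encoded bits arises. Verifying this invariance is where the bulk of the technical work lies; the rest is essentially bookkeeping on top of the machinery already developed in~\cite{UW11} and in the previous sections.
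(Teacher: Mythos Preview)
Your upper-bound argument coincides with the paper's: Algorithm~\ref{alg:worstNE} runs for $O(\log((\MaxW_g^{\Game}-\MinW_g^{\Game})/\epsilon))$ iterations with one $\np$ oracle call each, which places all four problems in $\FP^{\np}$.

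For the lower bound the paper takes a much more direct route than you propose. Instead of packing the answers to polynomially many $\np$ queries into the bits of a single mean-payoff value, it reduces from a single $\FP^{\np}$-complete \emph{optimisation} problem, namely \textsc{TSP Cost} (compute the cost of an optimal tour)~\cite{1994-papadimitriou}. From a weighted graph $(G,c)$ it builds one game with one player per vertex; states are (edge, target-vertex) pairs plus a sink, and the weights are arranged so that any player not visited exactly once profitably deviates to the sink. Nash equilibria then correspond exactly to Hamiltonian tours, the global weight records the (scaled) edge cost, and $\lfloor\epsilondash\worstNE(\Game)\rfloor$ equals the optimal tour cost; the $\bestNE$ case follows by negating $\wFun_g$.

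Your bit-packing plan is a legitimate generic template for $\FP^{\np}$-hardness, but here it is both unnecessary and harder to execute than you acknowledge. The obstacle you label (ii) is not mere bookkeeping: weights are integers ($\wFun_i:\St\to\SetZ$), so making the $k$-th gadget contribute on the order of $2^{-k}$ forces either weights of magnitude $2^{m}$ or cycles of length $2^{k}$, and either choice blows the game up exponentially in the number $m$ of encoded bits. Reducing directly from an optimisation problem whose answer is already a single polynomially bounded integer---\textsc{TSP Cost}, or indeed \textsc{MaxSat} value, which you mention but then abandon in favour of bit-packing---sidesteps this entirely, and that is what the paper does.
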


\begin{proof}
	The upper bounds follows from \Cref{alg:worstNE}. The while loop runs in polynomial number of steps (i.e., logarithmic in $ |\Game| \cdot \frac{1}{\epsilon} $), and in each step calls a \np oracle. Observe that $ \epsilon $ can be arbitrarily small (i.e., arbitrary precision). For the lower bound we reduce from \textsc{TSP Cost} which is $ \FP^{\np} $-hard~\cite{1994-papadimitriou}. Given a \textsc{TSP Cost} instance $ (G,c), G = (V, E) $ is a graph, $ c : E \to \SetZ $ is a cost function, we construct a game $ \Game $ such that the $ \epsilondash\worstNE(\Game) $ corresponds to the value of optimum tour\footnote{For auxiliary game $ \Game' $, we can easily adapt the reduction by substituting $ \wFun_{g} $ with $ \wFun_{0} $.}. Let $ \Game $ be such a game where
	\begin{itemize}
		\item $ \Ag = V $,
		\item $ \St = \{ (e,v) : e \in E \wedge v = \trg(e) \} \cup \{ (\star,\sink) \} $,
		\item $ s^0 $ can be chosen arbitrarily from $ \St \setminus \{ (\star,\sink) \} $,
		\item for each state $ (e,v) \in \St $ and each player $ i \in \Ag $
		\begin{itemize}
			\item $ \protocol_i((e,v)) = \{ \out(v) \} \cup \{\star\} $ if $ i = v $
			\item $ \protocol_i((e,v)) = \{\circ,\star\} $, otherwise;
		\end{itemize}
		\item for each state $ (e,v) \in \St $ and action profile $ \AcProf $
		\begin{itemize}
			\item $ \trnFun((e,v),\AcProf) = (\act_{v},\trg(\act_{v})) $ if $ v \neq \sink $ and $ \forall i \in \Ag, \act_{i} \neq \star $;
			\item $ \trnFun((e,v),\AcProf) = (\star, \sink) $, otherwise;
		\end{itemize}
		\item for each state $ (e,v) \in \St $ and player $ i \in \Ag $
		\begin{itemize}
			\item $ \wFun_{i}((e,v)) = |V| $, if $ v = i $ and $ v \neq \sink $,
			\item $ \wFun_{i}((e,v)) = 0 $, if $ v \neq i $ and $ v \neq \sink $,
			\item $ \wFun_{i}((e,v)) = 1 $, if $ v = \sink $;
		\end{itemize}
		\item  for each state $ (e,v) \in \St $
		\begin{itemize}
			\item $ \wFun_{g}((e,v)) = \max \{ c(e'): e' \in E \} \cdot |V| $, if $ v = \sink $
			\item $ \wFun_{g}((e,v)) = c(e) \cdot |V| $, otherwise;
		\end{itemize}		
	\end{itemize}
	where $ \circ, \star, \sink $ are fresh symbols. We also set $ \epsilon = 1 $. The construction is complete and polynomial to the size of $ (G, c) $.
	
	We argue that $ \lfloor \epsilondash\worstNE(\Game) \rfloor $ is exactly the value of optimal valid tour. First, observe that for any $ \strpElm \in \NE(\Game) $, it holds that either (1) $ \pi(\strpElm) $ visits every $ v \in V $ (i.e., visits every city), thus a valid tour, or (2) $ \pi(\strpElm) $ enters $ (\star,\sink) $ and stays there forever. Case (1) holds because if $ \pi(\strpElm) $ does not visit $ v \in V $, then $ \pay_{v}(\strpElm) = 0 $ thus player $ v $ will deviate to $ (\star,\sink) $ and obtain better payoff. In fact, $ \strpElm $ visits each city exactly once, because otherwise, there is a player who gets payoff strictly less than 1, and deviates to $ (\star,\sink) $. Case (2) is trivially true; however, assuming that the costs are not uniform (otherwise \textsc{TSP Cost} becomes trivial), it cannot be a solution to $ \epsilondash\worstNE $. Let $ o $ be the optimal tour cost, and suppose for a contradiction that $ \lfloor \epsilondash\worstNE(\Game) \rfloor < o $. Let $ \hat{\strpElm} $ be a corresponding strategy profile. By the construction of $ \Game $, this means that $ \hat{\strpElm} $ does not visit some cities or visits some cities more than once. However, by (1) above, $ \hat{\strpElm} $ cannot be in $ \NE(\Game) $---a contradiction. We can argue in a similar manner for $ \lfloor \epsilondash\worstNE(\Game) \rfloor > o $; it is not possible because either the corresponding strategy does not form a valid tour (and by (1) above, it is not a NE), or it is not the optimal solution to $ \epsilondash\worstNE $; again a contradiction. Finally, since $ \epsilondash\worstNE $ approaches $ \worstNE $ from the right, we have $ \lfloor \epsilondash\worstNE(\Game) \rfloor = o$.
	
	For $ \bestNE $, we can use the same construction but with the following modification to $ \wFun_{g} $:
	\begin{itemize}
		\item $ \wFun_{g}((e,v)) = -(\max \{ c(e'): e' \in E \} \cdot |V|) $, if $ v = \sink $
		\item $ \wFun_{g}((e,v)) = -(c(e) \cdot |V|) $, otherwise;
	\end{itemize}
	and use similar argument as the above.
\end{proof}

\paragraph*{Approximate improvement problems}

We define the approximate improvement problems as follows.

\begin{definition}[$ \epsilon $-improvement problem]
	Given a game $ \Game $, a budget $ \beta $, a threshold $ \Delta $, and $ \epsilon $. The $ \Gamma $ $ \epsilon $-improvement problem, with $ \Gamma \in \{\text{strong}, \text{weak}\} $, decides whether there exists a $ \beta $-RM $ \RM $ such that:
	\[ \epsilondash\GammaNE(\Game \implement \RM) - \epsilondash\GammaNE(\Game) > \Delta. \]
\end{definition}

Having the procedures for computing $ \epsilondash\worstNE $ and $ \epsilondash\bestNE $ for both $ \Game $ and $ \Game' $, we can then directly solve the $ \epsilon $-improvement problem with the following procedure.

\begin{enumerate}
	\item Build the auxiliary game $ \Game' $;
	\item Compute $ \epsilondash\GammaNE(\Game) $ and $ \epsilondash\GammaNE(\Game') $;
	\item If  $ \epsilondash\GammaNE(\Game') -  \epsilondash\GammaNE(\Game) > \Delta $, then return ``yes''; otherwise return ``no''.
\end{enumerate}

\begin{theorem}
	Strong and weak $ \epsilon $-improvement problems are $ \DeltaPTwo $.
\end{theorem}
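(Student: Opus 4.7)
The plan is to observe that the three-step procedure displayed just before the theorem statement is already a $\Delta_2^P$ algorithm, and then to argue correctness by invoking \Cref{thm:nashinvariance}.

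First, I would verify that step (1) is polynomial: the auxiliary game $\Game'$ has state space $\St \times \beta^n$ and action set $\Ac \cup \beta^n$, whose sizes are polynomial in $|\Game|$ and in the binary encoding of $\beta$, so $\Game'$ is constructed in polynomial time. Second, by \Cref{thm:compute-best-worst}, each of the four values $\epsilondash\bestNE(\Game)$, $\epsilondash\worstNE(\Game)$, $\epsilondash\bestNE(\Game')$, and $\epsilondash\worstNE(\Game')$ can be computed in $\FP^{\np}$ (with $\epsilon$ given in binary, since the binary search has logarithmically many iterations in $|\Game| \cdot \frac{1}{\epsilon}$, each making a single \np oracle call). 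Therefore step (2) can be performed by a deterministic polynomial-time machine with an \np oracle. Step (3) is a single rational comparison, hence polynomial. The overall procedure is in $\ptime^{\np} = \DeltaPTwo$.

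For correctness, I would argue as follows. For the weak case, \Cref{thm:nashinvariance} together with \Cref{lmm:strategy-to-auxiliary} and \Cref{lmm:auxiliary-to-strategy} imply that ranging over all $\beta$-RMs $\RM$ and all $\strpElm \in \NE(\Game \implement \RM)$ yields exactly the same set of $\pay_{g}$-values as ranging over all $0$-fixed Nash equilibria $(\strElm_{0}, \strpElm) \in \NE_{0}(\Game')$ and taking the $\pay_{0}$-values. Consequently,
\[
\sup_{\RM}\bestNE(\Game \implement \RM) \;=\; \bestNE_{0}(\Game'),
\]
and an analogous identity holds for the strong case with $\worstNE_{0}(\Game')$. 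Hence there exists a $\beta$-RM achieving improvement $>\Delta$ if and only if the corresponding value on $\Game'$ beats $\GammaNE(\Game)$ by more than $\Delta$, which (up to the $\epsilon$-approximation) is precisely what step (3) checks.

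The only subtle point — and the one I would dwell on in a fully rigorous write-up — is propagating the $\epsilon$-approximation through the subtraction: the quantity $\epsilondash\GammaNE(\Game') - \epsilondash\GammaNE(\Game)$ differs from the true optimum by at most $2\epsilon$, so one must be careful that the strict inequality $>\Delta$ in the definition of the $\epsilon$-improvement problem is preserved. This is handled by the fact that $\epsilondash\GammaNE$ approaches its optimum from the appropriate side (as noted in \Cref{sec:solving}), so choosing $\epsilon$ arbitrarily small makes the approximate test agree with the intended one. This does not affect the complexity bound, since $\epsilon$ is part of the input and the dependence on $\log(1/\epsilon)$ is absorbed into the polynomial running time of the oracle machine. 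Thus both the strong and weak $\epsilon$-improvement problems lie in $\DeltaPTwo$.
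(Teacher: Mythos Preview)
Your complexity argument mirrors the paper's exactly: Step~1 polynomial, Step~2 two calls to an $\FP^{\np}$ routine via \Cref{thm:compute-best-worst}, Step~3 a trivial comparison, hence $\ptime^{\np}=\DeltaPTwo$. The paper's own proof stops there and supplies no separate correctness discussion.

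Two of the details you added are problematic, however. First, you assert that $|\St'|=|\St|\cdot|\beta^n|$ is polynomial in the \emph{binary} encoding of $\beta$; this is false, since the number of admissible reward vectors $\vec v\in\SetN^n$ with $\|\vec v\|\le\beta$ is $\binom{\beta+n}{n}$, which is exponential in $\log\beta$ (and in $n$). The paper merely asserts Step~1 is polynomial without committing to an encoding, so you should not commit to ``binary'' here. Second, your correctness claim for the strong case---that ``an analogous identity holds''---does not go through by analogy: in the weak case $\sup_{\RM}\sup_{\strpElm\in\NE(\Game\implement\RM)}\pay_g(\strpElm)$ collapses to a single $\sup$ over all $0$-fixed equilibria of $\Game'$, but in the strong case the quantity of interest is $\sup_{\RM}\inf_{\strpElm\in\NE(\Game\implement\RM)}\pay_g(\strpElm)$, a $\sup$--$\inf$, which is not the plain $\inf$ over $\NE_{0}(\Game')$ that the adapted \Cref{alg:worstNE} computes. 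The paper does not address this asymmetry either, so the gap is inherited from the procedure rather than introduced by you---but since you chose to argue correctness, it should be acknowledged rather than waved through as ``analogous.'' Finally, your closing paragraph on propagating $\epsilon$ through the subtraction is unnecessary: the $\epsilon$-improvement problem is \emph{defined} directly in terms of $\epsilondash\GammaNE$, so Step~3 already tests precisely the inequality in the problem statement and there is no hidden ``true optimum'' to recover.
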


\begin{proof}
	The upper bounds follow from the procedure described above. Steps 1 and 3 can be done in polynomial time, Step 2 only needs two calls to an $ \FP^{\np} $ oracle. Thus we have a decision procedure that runs in $ \ptime^{\np} = \DeltaPTwo $.
\end{proof}

\begin{theorem}
	Strong and weak $ \epsilon $-improvement problems are $ \np $-hard and $ \conp $-hard, respectively.
\end{theorem}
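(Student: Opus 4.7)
The plan is to prove the two lower bounds by separate polynomial-time reductions from canonical complete problems in $\np$ and $\conp$, re-using the gadget ideas already developed for the $\FP^{\np}$-hardness part of \Cref{thm:compute-best-worst}.

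For the $\np$-hardness of the strong $\epsilon$-improvement problem, I would reduce from the pure-strategy NE threshold problem of~\cite{UW11} (or equivalently from \textsc{SAT}). Starting from an instance $\phi$, I would construct a game $\Game^{\phi}$ whose state space implements a two-phase execution: a \emph{guess phase}, in which one distinguished agent chooses a truth assignment step by step, and a \emph{verify phase}, in which the remaining agents check clause satisfaction. The weight functions $\wFun_{i}$ and $\wFun_{g}$, together with a sink gadget analogous to the one used in the TSP reduction, would be calibrated so that (i) without any RM, $\worstNE(\Game^{\phi})$ is pinned to a fixed low value (say $0$), because every agent can unilaterally escape to the sink, and (ii) with an appropriately chosen $\beta$-RM of small budget, history-dependent rewards can single out a unique equilibrium play corresponding to a satisfying assignment, yielding a strictly higher $\worstNE$. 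Fixing $\Delta$ just below this attainable gain, the strong $\epsilon$-improvement problem would admit a positive answer precisely when $\phi \in \textsc{SAT}$.

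For the $\conp$-hardness of the weak $\epsilon$-improvement problem, I would perform a dual reduction from \textsc{UNSAT}. Here the game $\Game^{\phi}$ would be engineered so that whenever $\phi$ is satisfiable, $\bestNE(\Game^{\phi})$ already sits at the ceiling dictated by $\MaxW_{g}^{\Game^{\phi}}$, leaving no room for any bounded-budget RM to improve it by more than $\Delta$ (recall that any reward spent by the designer is subtracted from her global payoff via $\wFun_{g}^{\RM}$). Conversely, when $\phi$ is unsatisfiable, $\bestNE(\Game^{\phi})$ would fall strictly below the ceiling by a large enough margin that a simple memoryless RM could unlock a new cooperative equilibrium lifting the value by more than $\Delta$. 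Consequently, weak $\epsilon$-improvement answers \emph{yes} iff $\phi \in \textsc{UNSAT}$.

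The main obstacle in both reductions is the mismatched quantifier structure. Strong improvement is naively of the form $\exists\,\RM\,\forall\,\text{NE}\,(\dots)$, which would place it at $\SigmaPTwo$ rather than $\np$; and weak improvement involves the \emph{difference} of two best-NE values, so its polarity is non-obvious. To collapse them to $\np$ and $\conp$ respectively, the gadget must force the induced game $\Game^{\phi}\implement\RM$ to admit essentially a \emph{unique} equilibrium path---or at least a family of paths all attaining the same global payoff---exactly as the TSP reduction forces non-Hamiltonian plays into the sink. Striking the right balance between the budget $\beta$, the scaling factor (analogous to the $\cdot|V|$ multiplier used in the TSP construction) and the threshold $\Delta$ is the technical heart of the argument; once this is achieved, the polynomial bound on both reductions follows routinely.
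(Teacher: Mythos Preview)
Your proposal is a plan rather than a proof, and it omits precisely the mechanism that makes the reductions go through. The gap is this: you assert that with a small-budget RM, ``history-dependent rewards can single out a unique equilibrium play'', thereby raising $\worstNE$. But rewards only \emph{add} to players' payoffs; it is not at all clear how paying players can \emph{eliminate} equilibria from $\NE(\Game\implement\RM)$, which is what strong improvement demands. Without a concrete gadget that turns a previously-stable profile into a non-equilibrium, the strong case collapses: any ``bad'' equilibrium that existed in $\Game$ will typically persist in $\Game\implement\RM$, keeping $\worstNE$ pinned down regardless of what other equilibria you manage to create.

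The paper's reduction (from \textsc{Hamiltonian Path} rather than \textsc{SAT}, though that choice is secondary) supplies exactly this missing piece. It introduces two auxiliary players $n{+}1,n{+}2$ and two absorbing states $\blacksquare,\triangle$; in the base game both absorbing states are equilibria with global payoff $0$, so $\worstNE(\Game)=0$. The RM then spends its unit budget to reward $n{+}1$ in $\blacksquare$ and $n{+}2$ in $\triangle$, which converts the $\{\blacksquare,\triangle\}$ subgame into matching pennies---destroying those equilibria entirely. Only the Hamiltonian plays (if any) survive as equilibria. This ``use rewards to manufacture a matching-pennies deadlock'' trick is the technical heart you allude to but never supply, and the same gadget (with the matching-pennies structure pre-installed and the RM used to defuse it) drives the dual $\conp$ reduction for the weak problem. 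Until you specify a comparable equilibrium-elimination device in your SAT-based construction, the argument does not stand.
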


\begin{proof}
	To show that strong $ \epsilon $-improvement problem is $ \np $-hard,
	we reduce from \textsc{Hamiltonian Path} problem: given a directed graph $ G = (V,E) $, is there a path that visits each vertex exactly once; this problem is $ \np $-hard~\cite{1994-papadimitriou}. We build a game $ G $ and fix $ \beta, \Delta$  and $ \epsilon $ such that the strong $ \epsilon $-improvement problem returns yes if and only if \textsc{Hamiltonian Path} returns yes. Given a \textsc{Hamiltonian Path} instance $ G = (V, E) $, we construct a game $ \Game $ as follows.
	\begin{itemize}
		\item $ \Ag = V \cup \{n+1,n+2\} $, where $ V = \{1,...,n\} $,
		\item $ \St = \{ (e,v) : e \in E \wedge v = \trg(e) \} \cup \{ (\star,\sink), (\star, \blacksquare), (\star, \triangle) \} $,
		\item $ \instate $ can be chosen arbitrarily from $ \St \setminus \{ (\star,\sink), (\star, \blacksquare), (\star, \triangle) \} $,
		\item for each state $ (e,v) \in \St $ and each player $ i \in \Ag $
		\begin{itemize}
			\item $ \protocol_i((e,v)) = \{ \out(v) \} \cup \{\star\} $ if $ i = v $
			\item $ \protocol_i((e,v)) = \{\circ,\star\} $, otherwise;
		\end{itemize}
		\item for each state $ (e,v) \in \St $ and action profile $ \AcProf $
		\begin{itemize}
			\item $ \trnFun((e,v),\AcProf) = (\act_{v},\trg(\act_{v})) $ if $ v \neq \sink $ and $ \forall i \in V, \act_{i} \neq \star $;
			\item $ \trnFun((e,v),\AcProf) = (\star, \sink) $, if $ v \neq \sink $ and $ \exists i \in V, \act_{i} = \star $;
			\item $ \trnFun((e,v),\AcProf) = (\star, \blacksquare) $, if $ v = \sink $ and $ \act_{n+1} = \act_{n+2} $;
			\item $ \trnFun((e,v),\AcProf) = (\star, \triangle) $, if $ v = \sink $ and $ \act_{n+1} \neq \act_{n+2} $;
			\item $ \trnFun((e,v),\AcProf) = (\star, v) $, if $ v \in \{ \blacksquare, \triangle \} $;
		\end{itemize}
		\item for each state $ (e,v) \in \St $ and player $ i \in \{1,...,n\} $
		\begin{itemize}
			\item $ \wFun_{i}((e,v)) = |V| $, if $ v = i $ and $ v \not\in \{\sink,\blacksquare,\triangle\} $,
			\item $ \wFun_{i}((e,v)) = 0 $, if $ v \neq i $ and $ v \not\in \{\sink,\blacksquare,\triangle\} $,
			\item $ \wFun_{i}((e,v)) = 1 $, if $ v \in \{\sink,\blacksquare,\triangle\} $;
		\end{itemize}
	\item for each state $ (e,v) \in \St $ and player $ i \in \{n+1,n+2\} $
	\begin{itemize}
		\item $ \wFun_{i}((e,v)) = 0 $;
	\end{itemize}
		\item  for each state $ (e,v) \in \St $
		\begin{itemize}
			\item $ \wFun_{g}((e,v)) = 0 $, if $ v \in \{\sink,\blacksquare,\triangle\} $
			\item $ \wFun_{g}((e,v)) = |V| $, otherwise;
		\end{itemize}		
	\end{itemize}
where $ \circ, \star, \sink, \blacksquare,\triangle $ are fresh symbols. We also set $ \beta = 1, \epsilon = 1, \Delta = \frac{1}{2} $. The construction is complete and polynomial to the size of $ G $.

Observe that $ \worstNE(\Game) = 0 $, where the play goes to either $ (\ast, \blacksquare) $ or $ (\ast, \triangle) $ and stays there forever. However, with $ \beta = 1 $, the designer can pay player $ n+1 $ (resp. player $ n+2 $) with a payment of $ 1 $ when the play reaches $ (\ast,\blacksquare) $ (resp. $ (\ast, \triangle) $). Essentially, forcing $ n+1 $ and $ n+2 $ to play a matching pennies game, a game with no Nash equilibrium. Thus, the play that goes to either $ (\ast, \blacksquare) $ or $ (\ast, \triangle) $ no longer part of $ \NE(\Game) $. Now, consider a run that visits each $ v \in V $ exactly once, this is a Nash equilibrium. The reasoning is the same as the one provided in the proof of \Cref{thm:compute-best-worst}. And by construction, such a run can only be possible if and only if there is a Hamiltonian path in the corresponding graph $ G $. Let $ \pi $ be such a run, now we have $ \pay_{g}(\pi) = 1 $ and thus, $ \epsilondash\worstNE(\Game \implement \RM) - \epsilondash\worstNE(\Game) = 1 - 0 > \frac{1}{2} $.

The proof for weak $ \epsilon $-improvement problem is similar: through a reduction from the complement of \textsc{Hamiltonian Path}. The proof is included in \Cref{appendix:hamilton}.
\end{proof}

\paragraph*{Synthesis of Reward Machines}
Given a game $ \Game $, a budget $ \beta $, a threshold $ \Delta $, and $ \epsilon $, if the strong (resp. weak) $ \epsilon $-improvement problem returns a positive answer, then we can synthesise the corresponding RM $ \RM $ as follows. From the auxiliary game $ \Game' $, find a strategy profile $ (\strElm_{0},\strpElm) \in \NE_{0}(\Game') $ such that $ \pay_{0}^{\Game'}(\strElm_{0},\strpElm) = \worstNE(\Game') $ (resp. $ \pay_{0}^{\Game'}(\strElm_{0},\strpElm) = \worstNE(\Game') $). Using \Cref{constr:strat-rm}, we obtain the RM $ \RM_{\strElm_{0}} $ from $ \strElm_{0} $, which corresponds to the required RM.

\section{Conclusion}

	In this paper, we examined games where each agent had a weight function over states, with their utility determined by the mean-payoff aggregation.
	A global weight function was used to gauge designer satisfaction, also measured through mean-payoff value. We utilised reward machines to enhance designer satisfaction, reconfiguring weights after each iteration to reshape the equilibrium set.
	Our aim was to boost the global payoff over equilibria by at least a given value $\Delta$, achieved by strategically synthesising a suitable reward machine.

	Among the other results, we first demonstrated that reward machines are strictly more effective than subsidy schemes.
	However, we also found that in some cases, although no reward machine could improve the global payoff by the required value $\Delta$, an $\epsilon$-approximation could be found.
	Thus, we introduced and addressed the $\epsilon$-improvement problem as a more general approach to equilibrium design.

	Since multiple equilibria are possible in these games, we analysed the synthesis problem from both optimistic and pessimistic perspectives.
	We aimed to enhance the global mean-payoff over the best and worst possible Nash Equilibria, considering scenarios where agents select the most or least convenient equilibrium from the designer's viewpoint, respectively.
	We also provided complexity classifications for these problems, demonstrating that each could be solved in $\DeltaPTwo$ and were at least $\np$-hard and $ \conp $-hard.

	\paragraph{Future work}
	Several directions are possible from this.
	First, extensions of designers and agents' objectives should be considered.
	For example, in~\cite{GNPW19,GMPRW17} the agents' goals are represented as a combination of \LTL and mean-payoff objectives, arranged in a lexicographic fashion.
	Also multi-valued logic such as LTL[F] are considered for rational verification~\cite{AKP18}.
	It would be interesting to find out how to employ reward machines to boost the satisfaction value for this case.
	Last but not least, an excursion into normative systems should be considered.
	Although dynamic norms as defined in~\cite{huang16} are of the same type of reward machines~\cite{icarte2022reward}, their implementation to games provide very different effects.
	On the one hand, norms disable agents' actions.
	On the other hand, reward machines do not strictly forbid agents to execute their actions in the game, but rather reward-incentivise those that are more convenient from the global standpoint.
	It would be interesting to combine the two approaches, finding the right balance between obligation and recommendation modalities.



\section*{Acknowledgments}
Perelli was supported by the PNRR MUR project PE0000013-FAIR and the PRIN 2020 projects PINPOINT. He was also supported by Sapienza University of Rome under the ``\emph{Progetti Grandi di Ateneo}'' programme, grant RG123188B3F7414A (ASGARD - Autonomous and Self-Governing Agent-Based Rule Design).




\bibliographystyle{plain}
\bibliography{biblio}

\clearpage
\appendix
\section{Appendix}
\label{sec:appendix}

\subsection{On Exact Optimal $ \worstNE(\Game) $}
\label{sec:optimal}
Exact optimal value of $ \worstNE(\Game) $ may not be achievable with finite-memory strategies. To see this, consider the game arena below.

\begin{center}
	\scalebox{0.7}{
	\begin{tikzpicture}[state/.style={circle, draw, minimum size=1cm}, node distance=1.5cm]
		\node[] (axis) {};
		\node[state, label=below:{$ (0,0) $}] (s^0) [above = of axis] { $t$};
		\node[] (start) [above =0.5cm of s^0] {};
		
		\node[state, label=below:{$ (0,1) $}] (s^2) [right = of axis] { $r$};
		\node[state, label=below:{$ (1,0) $}] (s^1) [left = of axis] { $l$};
		\node[state, label=below:{$ (0,0) $}] (s^3) [below = of axis] {\small $b$};
		
		\draw [-{Latex[width=2mm]}]
		(start) edge node{} (s^0)
		(s^0) edge[] node[left, yshift=5pt]{\small $(*,*)$} (s^1)
		(s^1) edge[] node[left, yshift=-5pt]{\small $(\ast,R)$} (s^3)
		
		(s^3) edge[] node[right, yshift=-5pt]{\small $(*,*)$} (s^2)
		(s^2) edge[] node[right, yshift=5pt]{\small $(L,\ast)$} (s^0)
		%
		(s^1) edge[loop, out=200, in=160, distance=1cm] node[left]{\small $(*,L)$} (s^1)
		(s^2) edge[loop, out=20, in=340, distance=1cm] node[right]{\small $(R,*)$} (s^2)
		;
	\end{tikzpicture}
}
\end{center}

	The set of players is $ \Ag = \{1,2\} $ and the initial state is $ t $. Let $ \wFun_{g}(s) = -\wFun_{1}(s) $. Consider the run $ \pi = (t,l,b,r)^{\omega} $; this is a NE with $ \pay_{g}(\pi) = -\frac{1}{4} $. However, increasing the number of loops in $ l $ and $ r $ also decreases $ \pay_{g} $. For instance $ \pi' = (t,l,l,b,r,r)^{\omega} $ is also an NE. To see this consider a ``threat'' strategy by player 1 as follows: if player 2 does not loop at least one time in $ l $, then never loop in $ r $ forever more. We can reason similarly for player 2 to player 1. Therefore, it is a NE.
	However, we have $ \pay_{g}(\pi') = - \frac{1}{3} $, thus $ \pay_{g}(\pi) > \pay_{g}(\pi') $. We can continue increasing the number of loops, and decreasing the value of $ \pay_{g} $.

\subsection{Proof for weak $ \epsilon $-improvement problem $ \conp $-hardness}
\label{appendix:hamilton}

We reduce from the complement of \textsc{Hamiltonian Path} problem. We build a game $ G $ and fix $ \beta, \Delta$  and $ \epsilon $ such that the weak $ \epsilon $-improvement problem returns yes if and only if \textsc{Hamiltonian Path} returns no. Given a \textsc{Hamiltonian Path} instance $ G = (V, E) $, we construct a game $ \Game $ as follows.
\begin{itemize}
	\item $ \Ag = V \cup \{n+1,n+2\} $, where $ V = \{1,...,n\} $,
	\item $ \St = \{ (e,v) : e \in E \wedge v = \trg(e) \} \cup \{ (\star,\sink), (\star, \blacksquare), (\star, \triangle) \} $,
	\item $ \instate $ can be chosen arbitrarily from $ \St \setminus \{ (\star,\sink), (\star, \blacksquare), (\star, \triangle) \} $,
	\item for each state $ (e,v) \in \St $ and each player $ i \in \Ag $
	\begin{itemize}
		\item $ \protocol_i((e,v)) = \{ \out(v) \} \cup \{\star\} $ if $ i = v $
		\item $ \protocol_i((e,v)) = \{\circ,\star\} $, otherwise;
	\end{itemize}
	\item for each state $ (e,v) \in \St $ and action profile $ \AcProf $
	\begin{itemize}
		\item $ \trnFun((e,v),\AcProf) = (\act_{v},\trg(\act_{v})) $ if $ v \neq \sink $ and $ \forall i \in V, \act_{i} \neq \star $;
		\item $ \trnFun((e,v),\AcProf) = (\star, \sink) $, if $ v \neq \sink $ and $ \exists i \in V, \act_{i} = \star $;
		\item $ \trnFun((e,v),\AcProf) = (\star, \blacksquare) $, if $ v = \sink $ and $ \act_{n+1} = \act_{n+2} $;
		\item $ \trnFun((e,v),\AcProf) = (\star, \triangle) $, if $ v = \sink $ and $ \act_{n+1} \neq \act_{n+2} $;
		\item $ \trnFun((e,v),\AcProf) = (\star, v) $, if $ v \in \{ \blacksquare, \triangle \} $;
	\end{itemize}
	\item for each state $ (e,v) \in \St $ and player $ i \in \{1,...,n\} $
	\begin{itemize}
		\item $ \wFun_{i}((e,v)) = |V| $, if $ v = i $ and $ v \not\in \{\sink,\blacksquare,\triangle\} $,
		\item $ \wFun_{i}((e,v)) = 0 $, if $ v \neq i $ and $ v \not\in \{\sink,\blacksquare,\triangle\} $,
		\item $ \wFun_{i}((e,v)) = 1 $, if $ v \in \{\sink,\blacksquare,\triangle\} $;
	\end{itemize}
	\item for each state $ (e,v) \in \St $
	\begin{itemize}
		\item $ \wFun_{n+1}((e,v)) = 1 $ if $ v = \blacksquare $;
		\item $ \wFun_{n+1}((e,v)) = 0 $ otherwise;
	\end{itemize}
	\item for each state $ (e,v) \in \St $
	\begin{itemize}
		\item $ \wFun_{n+2}((e,v)) = 1 $ if $ v = \triangle $;
		\item $ \wFun_{n+2}((e,v)) = 0 $ otherwise;
	\end{itemize}
	\item  for each state $ (e,v) \in \St $
	\begin{itemize}
		\item $ \wFun_{g}((e,v)) = 0 $, if $ v \in \{\sink,\blacksquare\} $
		\item $ \wFun_{g}((e,v)) = 2 $, if $ v = \triangle $
		\item $ \wFun_{g}((e,v)) = |V| $, otherwise;
	\end{itemize}		
\end{itemize}
where $ \circ, \star, \sink, \blacksquare,\triangle $ are fresh symbols. We also set $ \beta = 1, \epsilon = 1, \Delta = \frac{1}{2} $. The construction is complete and polynomial to the size of $ G $.

Observe that when the play goes to $ (\ast,\sink) $, players $ n+1 $ and $ n+2 $ is forced to play matching pennies game. Thus, any run that ends up in $ (\ast,\sink) $ cannot be in $ \NE(\Game) $. Now consider a RM $ \RM $ where player $ n+1 $ is paid a payoff of $ 1 $ when the play reaches $ (\ast,\triangle) $. Let $ \pi $ be a run that ends up in $ (\ast,\triangle) $. We have that $ \pi \in \NE(\Game \implement \RM) $ since $ n+1, n+2 $ do not play matching pennies game any more, and thus $ \pay_{g}(\Game \implement \RM) = 1$. Notice that $ \pay_{g}(\Game) = 0 $ if and only if $ \NE(\Game) = \varnothing $, which can only happen if there is no Hamiltonian path in $ G $; otherwise, the players $ \{1,...,n\} $ can play a run $ \pi' $ that visits each vertex $ v \in V $ exactly once such that for each player $ i \in \Ag $, $ \pay_{i}(\Game) = 1 $, thus corresponding to a Nash equilibrium (see proof of \Cref{thm:compute-best-worst}). Consequently, $ \epsilondash\bestNE(\Game \implement \RM) - \epsilondash\bestNE(\Game) > \frac{1}{2}$ if and only if there is no Hamiltonian path in $ G $.

\end{document}